\setlist{itemsep=0mm}
\newcommand\numberthis{\addtocounter{equation}{1}\tag{\theequation}}
\newtheorem{theorem}{Theorem}
\newtheorem*{theorem*}{Theorem}
\newtheorem{lemma}[theorem]{Lemma}
\newtheorem*{lemma*}{Lemma}
\theoremstyle{definition}
\newtheorem{definition}[theorem]{Definition}
\newtheorem*{remark*}{Remark}
\newcommand{\NN}{\ensuremath{\mathbb{N}}}
\newcommand{\RR}{\ensuremath{\mathbb{R}}}
\newcommand{\projerror}{g(n)}
\newcommand{\depthname}{depth-$d$ }
\newcommand{\nnote}[1]{}
\newcommand{\mnotetwo}[1]{}
\definecolor{darkbrown}{RGB}{101,67,33}
\title{Approximating Output Probabilities of Shallow Quantum Circuits which are Geometrically-local in any Fixed Dimension.}
\author{Suchetan Dontha, \, Shi Jie Samuel Tan, \, Stephen Smith, \\ Sangheon Choi, \, Matthew Coudron}
\date{}
\begin{document}

\maketitle

\begin{abstract}

    We present a classical algorithm that, for any $D$-dimensional geometrically-local,  quantum circuit $C$ of polylogarithmic-depth, and any bit string $x \in \{0,1\}^n$, can compute the quantity $|\bra{ x}C   \ket{0^{\otimes n}}|^2$ to within any inverse-polynomial additive error in quasi-polynomial time, for any fixed dimension $D$.  This is an extension of the result \cite{CC20}, which originally proved this result for $D = 3$.  To see why this is interesting, note that, while the $D=1$ case of this result follows from a standard use of Matrix Product States, known for decades, the $D=2$ case required novel and interesting techniques introduced in \cite{BGM19}.  Extending to the case $D=3$ was even more laborious, and required further new techniques introduced in  \cite{CC20}.  Our work here shows that, while handling each new dimension has historically required a new insight, and fixed algorithmic primitive, based on known techniques for $D \leq 3$, we can now handle any fixed dimension $D > 3$.  
    
    Our algorithm uses the Divide-and-Conquer framework of \cite{CC20} to approximate the desired quantity via several instantiations of the same problem type, each involving $D$-dimensional circuits on about half the number of qubits as the original.  This division step is then applied recursively, until the width of the recursively decomposed circuits in the $D^{th}$ dimension is so small that they can effectively be regarded as $(D-1)$-dimensional problems by absorbing the small width in the $D^{th}$ dimension into the qudit structure at the cost of a moderate increase in runtime.  The main technical challenge lies in ensuring that the more involved portions of the recursive circuit decomposition and error analysis from \cite{CC20} still hold in higher dimensions, which requires small modifications to the analysis in some places.  Our work also includes some simplifications, corrections and clarifications of the use of block-encodings within the original classical algorithm in \cite{CC20}.

\end{abstract}
\section{Introduction}
    
    It is known that it is $\#P$-hard to compute the quantity  $|\bra{ x}C   \ket{0^{\otimes n}}|^2$  to within $2^{-n^2}$ additive error for low-depth, geometrically-local quantum circuits $C$ \cite{ Mov19, KMM21}, and worst-case hardness results for this task date back to \cite{TD04}.  These hardness results indicate that computing output probabilities with such small additive error is almost certainly out of reach for both classical \emph{and} quantum computers.  If we restrict our attention to additive errors that are achievable with quantum computers, such as inverse polynomial additive error achievable by taking polynomially many samples from the quantum circuit $C$, then classical hardness for this estimation problem is much less clear.  In fact, \cite{BGM19} introduced an elegant classical polynomial time algorithm for this estimation task in the case of 2D circuits.  Their algorithm makes a novel use of 1D Matrix Product States carefully tailored to the 2D geometry of the circuit in question.  While it is not clear how to generalize the techniques of \cite{BGM19} to higher dimensional circuits, \cite{CC20} introduced a Divide-and-Conquer algorithm that can compute the quantity $|\bra{ x}C   \ket{0^{\otimes n}}|^2$ to within any inverse-polynomial additive error in quasi-polynomial time for any $3D$, constant-depth quantum circuit $C$.  The algorithm in \cite{CC20} works by recursively subdividing the quantum circuit $C$ into pieces, constructed using block-encodings, and introduces new techniques for analyzing the extent to which quantum entanglement between different qubits can impact the global quantity $|\bra{ x}C   \ket{0^{\otimes n}}|^2$.   
    
    Given the progression of ideas required to classically approximate the output probabilities of higher dimensional quantum circuits, it is natural to wonder what would be required to go even further.  In this work we will show that there exists a classical quasi-polynomial time algorithm which can compute $|\bra{ x}C   \ket{0^{\otimes n}}|^2$ to inverse polynomial additive error for any constant-depth, geometrically-local quantum circuit of fixed dimension $D$.

    \begin{theorem}[Main Result]
    For any $D$-dimensional geometrically-local, depth $d$ quantum circuit $C$ acting on $n$ qubits, the algorithm $\mathcal{A}_{full}(S=(C,L,M,N), \mathcal{B}, \delta, D)$ computes the quantity $|\bra{ x}C   \ket{0^{\otimes n}}|^2$ to within $\delta$ additive error in time
    $\delta^{-2} \cdot2^{O((d\polylog(n))^{D \cdot 3^{D}}) (1/\delta)^{1 / \log^2(n)}}$. \footnote{For clarity we assume that the $n$ qubits are arranged in a perfect D-dimensional cubic lattice.  Here $S=(C,L,M,N)$ is the synthesis describing circuit $C$, as defined in this paper and in \cite{CC20}, and $\mathcal{B}$ is our base-case algorithm which we specify to be the 2D algorithm of \cite{BGM19}, and which our algorithm uses to solve subproblems which have been recursively subdivided down to 2 dimensions. }
    \end{theorem}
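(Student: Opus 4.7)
The plan is to generalize the three-dimensional Divide-and-Conquer argument of \cite{CC20} by using two nested recursions: an \emph{outer} recursion on the dimension $D$, bottoming out at $D=2$ where we invoke the algorithm $\mathcal{B}$ of \cite{BGM19}, and, at each fixed $D$, an \emph{inner} recursion that halves the extent of the lattice along one coordinate axis. The output amplitude $|\bra{x}C\ket{0^{\otimes n}}|^2$ is first written, via the block-encoding gadgetry of \cite{CC20}, as a controlled combination of amplitudes for two $D$-dimensional subcircuits, each acting on roughly half of the qubits obtained by slicing the lattice with a hyperplane orthogonal to a chosen axis. The inner recursion is then applied to each half; after roughly $\log n$ levels the width along the chosen axis has shrunk to polylogarithmic size.

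At that point the outer recursion is triggered: I would bundle the polylogarithmically many qubits lying along each column in the $D$-th direction into a single qudit of polylog local dimension. The resulting circuit is geometrically local in a $(D-1)$-dimensional lattice of qudits and has only a polylogarithmic overhead in depth and local dimension, so it can be fed into the algorithm obtained from the outer recursive call at dimension $D-1$. Unrolling the outer recursion peels off one dimension at a time until we reach $D=2$, where $\mathcal{B}$ is called on subproblems whose qudit dimension has been aggregated from all $D-2$ absorbed axes. A routine accounting of how many recursive subproblems are generated, together with how the local dimension grows at each dimension-reduction step, gives the stated $2^{O((d\polylog(n))^{D\cdot 3^D})}\cdot \delta^{-2}(1/\delta)^{1/\log^2 n}$ runtime.

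The main obstacle, which the authors flag, is the error analysis for the inner recursion in dimension $D>3$. In \cite{CC20} one needed delicate estimates controlling how entanglement across a cut affects the global amplitude, and these estimates were verified explicitly for cuts of two-dimensional interfaces (the $3$-D case). The hard part here is to re-examine each of those arguments when the cut hyperplane is now a $(D-1)$-dimensional slab: one must check that (i) the block-encoding normalizations, (ii) the depths of the recursively produced subcircuits, (iii) the bounds on how output probabilities change under the light-cone-truncated substitutions all scale acceptably as a function of $D$, and (iv) that the approximation error compounds only polynomially through the $O(\log n)$ levels of inner recursion and the $D$ levels of outer recursion. Once this error bookkeeping is verified, the runtime bound follows by multiplying the per-level costs through both recursions.
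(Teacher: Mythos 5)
Your proposal takes essentially the same approach as the paper: an outer induction on the dimension $D$ (the paper's Theorem~\ref{thm:inductivestep}, with base case $D=3$ inherited from \cite{CC20} and an algorithmic bottom at $D=2$ handled by $\mathcal{B}$) combined with the inner multi-cut divide-and-conquer of Algorithm~\ref{alg:constant-width-assumption2}, which shrinks the side length along one axis until it falls below the stopping width $w_0 = \polylog(n)$ and then absorbs the residual width into an enlarged qudit/thickness structure (the parameter $w$ tracked in Lemma~\ref{thm:mainruntimelemma}) before recursing to dimension $D-1$. You also correctly identify the cut-error bookkeeping as the crux, which the paper handles via the dimension-adapted Lemmas~\ref{lem:singlecuttrick}--\ref{lem:multicuttrick} and the runtime/error recurrences in Lemmas~\ref{thm:mainruntimelemma} and~\ref{thm:mainerrorlemma}.
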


    A key motivation for generalizing simulation results to higher dimensions exists at the level of techniques.  Historically, the simulation of low-depth and geometrically local quantum circuits has required a new mathematical innovation every time the dimension, $D$, of the geometric locality is increased.  The $D=1$ case is solved using the famous technique of Matrix Product States (MPS), which is fundamental to the field and has been known for decades.  However, it was not until recently that an algorithm was discovered for estimating output amplitudes in the case $D=2$, and it requires a novel technique beyond standard MPS \cite{BGM19}.  Finding an algorithm for the $D=3$ case, \cite{CC20}, required a completely different approach, this time departing from the paradigm of MPS, and requiring 50 pages of mathematics to formalize a divide-and-conquer algorithm.  Our result shows that this trend of requiring completely new techniques to extend from $D$ to $D+1$ need not continue.  One fixed divide-and-conquer algorithmic primitive, allows us to inductively establish an additive-error classical simulation algorithm for any dimension $D$. 
    
    Note that, while our algorithm runs in quasi-polynomial time in $n$ for any fixed $D$, the runtime is triply exponential in the dimension $D$. If we set $D = O(\log(\log(\polylog(n))))$ and $\delta$ to be inverse quasi-polynomial, then the algorithm still runs in quasi-polynomial time on a constant depth geometrically local circuit. In particular, this means that the algorithm can approximate the output probabilities of any constant depth quantum circuit that is geometrically local in $O(\log(\log(\polylog(n))))$ dimensions. It is, therefore, interesting to consider the computational complexity of this problem as a function of $D$, since this could shed light on the extent to which arbitrary low-depth quantum circuits can be efficiently simulated.  As an extreme example, an algorithm which had runtime polynomial in $D$ could be used efficiently on constant depth quantum circuits which are not geometrically local at all.  This is because any constant depth quantum circuit on $n$-qubits can be considered to be geometrically local in dimension $D = n$.  We do not expect that our current approach can achieve a runtime polynomial in $D$, but we believe that even a runtime that is singly exponential in $D$, allowing the simulation of circuits which are geometrically local in dimension $D = \log(n)$, could have practically relevant consequences.  We leave, as an open problem the question of the optimal $D$-dependence for algorithms simulating constant-depth geometrically-local quantum circuits.
    
    Our paper is organized as follows: In section 2 we review block-encodings and syntheses, both of which are used extensively throughout the algorithm. Note that section 2 primarily consists of definitions and lemmas from \cite{CC20} that are tweaked for clarity and correctness. In section 3 we provide the pseudocode for our algorithm and prove our main result. The runtime and error analysis for our algorithm are located in sections 3.1 and 3.2 respectively.  

\section{Block-encodings and Syntheses}
\label{sec:preliminaries}

In order to state the pseudocode for our algorithms in Section \ref{sec:algorithms} below we first need to establish a way to construct the ``recursive subdivisions" of the quantum circuit $C$ that our divide-and-conquer algorithm iteratively creates.  We will concretely describe these subdivisions as ``syntheses", as defined in \cite{CC20} and reviewed here for the convenience of the reader.  Syntheses themselves use the idea of a block-encoding which we paraphrase below from \cite{GSLW18}.

\textbf{In order to understand the following discussion, which is essential to the rest of this paper, it is necessary to read sections 2, 3, and 4 of \cite{CC20}. The lemmas repeated in this section are only included here in order to clarify or correct certain definitions in section 3 of \cite{CC20}.  Many other definitions and lemmas from  sections 2, 3, and 4 of \cite{CC20} are not repeated here and must be read from the original document (see the arxiv verison at \url{https://arxiv.org/pdf/2012.05460.pdf}). }

\begin{definition}[Block-encoding]
     Suppose that $A$ is an $s$-qubit operator, $\alpha, \epsilon \in \RR_+$ and $a\in\NN$. Then we say that the $(s+a)$-qubit unitary $U$ is an $(\alpha, a, \epsilon)$-block-encoding of $A$, if 
     \[
     \norm{A-\alpha(\bra{0}^{\otimes a}\otimes I)U(\ket{0}^{\otimes a}\otimes I)}\le \epsilon.
     \]
\end{definition}

Consider a cut $B\cup M\cup F$ made anywhere in the cube and let $\sigma_{M\cup F}=$\\$\tr_B\left(C_{B\cup M\cup F}\ketbra{0}{0}_{B\cup M\cup F}C_{B\cup M\cup F}^\dagger\right)$.
The following result is obtained by applying Lemma 45 of \cite{GSLW18}:

\begin{lemma}[Block-encoding for $\sigma_{M\cup F}$]\label{def:sigma}
    The following is a $(1, |B\cup M\cup F|, 0)$-block-encoding of $\sigma_{M\cup F}$:
    \[\Gamma=
    \left(C_{B\cup M'\cup F'}^\dagger \otimes I_{M \cup F}\right)\left(I_{B} \otimes SWAP_{M \cup F,M' \cup F'}\right)\left(C_{B\cup M'\cup F'}\otimes I_{M \cup F}\right).
    \]
    In the above, $C_{B\cup M'\cup F'}$ is notation to indicate that we will be applying the circuit $C_{B\cup M\cup F}$ on the registers $B$, $M'$, and $F'$. In other words,
    \[
    \sigma_{M\cup F}=\left(\bra{0}_{B\cup M'\cup F'}\otimes I_{M\cup F}\right) \Gamma \left(\ket{0}_{B\cup M'\cup F'}\otimes I_{M\cup F}\right).
    \]
\end{lemma}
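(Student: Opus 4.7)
The plan is to directly verify the block-encoding identity
\[
\sigma_{M\cup F} = \left(\bra{0}_{B,M',F'}\otimes I_{MF}\right)\Gamma\left(\ket{0}_{B,M',F'}\otimes I_{MF}\right)
\]
by tracking the action of $\Gamma$ on an arbitrary test input $\ket{0}_{B,M',F'}\otimes\ket{\phi}_{MF}$ in the computational basis. This is an instance of the standard purification-and-SWAP recipe (Lemma~45 of \cite{GSLW18}), here specialized to the purification produced by applying the same circuit $C$ to a primed copy of the registers. Once I commit to the convention that $M'\cup F'$ is just a fresh ancillary copy of $M\cup F$, everything should reduce to a direct calculation with parameters $\alpha = 1$, $\epsilon = 0$, and $|B|+|M'|+|F'| = |B\cup M\cup F|$ ancilla qubits.

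First I would fix notation by letting $\ket{\psi}_{B,M',F'} := C_{B\cup M'\cup F'}\ket{0}$ denote the purification and expanding $\ket{\psi} = \sum_{j,k}\alpha_{jk}\ket{j}_B\ket{k}_{M'F'}$ in the computational basis, so that $\sigma_{M\cup F} = \sum_{k,\ell}\bigl(\sum_j \alpha_{jk}\alpha_{j\ell}^*\bigr)\ket{k}\bra{\ell}$ acting on $M\cup F$. Then I would propagate the test state through the three factors of $\Gamma$: applying $C_{B\cup M'\cup F'}\otimes I_{MF}$ yields $\ket{\psi}_{BM'F'}\otimes\ket{\phi}_{MF}$; applying $I_B\otimes SWAP_{MF,M'F'}$ gives $\sum_{j,k,\ell}\alpha_{jk}\phi_\ell\ket{j}_B\ket{\ell}_{M'F'}\ket{k}_{MF}$ (writing $\ket{\phi}=\sum_\ell \phi_\ell\ket{\ell}$); and applying $C^\dagger_{B\cup M'\cup F'}\otimes I_{MF}$ followed by projecting $B,M',F'$ onto $\bra{0}$ uses the identity $\bra{0}C^\dagger = \bra{\psi}$ to extract the overlap $\bra{\psi}\bigl(\ket{j}_B\ket{\ell}_{M'F'}\bigr) = \alpha_{j\ell}^*$. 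Collecting terms, the remaining state on $M\cup F$ equals $\sum_{k,\ell}\bigl(\sum_j \alpha_{jk}\alpha_{j\ell}^*\bigr)\phi_\ell\ket{k}_{MF} = \sigma_{M\cup F}\ket{\phi}$, as required.

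Since $\ket{\phi}$ was arbitrary, this establishes the block-encoding identity as an operator equation with the claimed $(1,|B\cup M\cup F|,0)$ parameters. The only real subtlety is the bookkeeping between primed and unprimed register labels and making sure the SWAP acts between the correct pair of subsystems; once that convention is pinned down there is no genuine obstacle, which is precisely why the lemma is presented as a direct specialization of Lemma~45 of \cite{GSLW18} to our specific purifying circuit.
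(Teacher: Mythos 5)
Your proof is correct and matches the paper's approach: the paper simply cites Lemma~45 of \cite{GSLW18}, and you carry out the computational-basis verification that underlies that lemma, specialized to the purification $C_{B\cup M'\cup F'}\ket{0}$. The algebra checks out, and you correctly identify the one point requiring care (the role of the SWAP between $M\cup F$ and its primed copy), so your proposal fills in exactly the details the paper delegates to the reference.
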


The registers $M'$ and $F'$ above are copies of the registers $M$ and $F$, respectively, and are introduced by Lemma 45 of \cite{GSLW18}. By interleaving $M'$ with $M$ and $B'$ with $B$ and adding swap gates where appropriate, we can ensure that the resulting circuit, $\Gamma$, is still geometrically-local and has depth at most $3$ times the depth of $C_{B\cup M\cup F}$. By simply moving the $M$ register in Lemma \ref{def:sigma} to the set of registers which are post-selected, we see that $\Gamma$ is also a block-encoding of $\rho_F := \bra{0}_M \sigma_{M\cup F}\ket{0}_M$.
\newpage
\begin{lemma}[Block-encoding for $\rho_F$]
    The block-encoding introduced in Lemma \ref{def:sigma}, $\Gamma$, is a\\
    $(1, |B\cup F| + 2|M|, 0)$-block-encoding of $\rho_F$. Note that Lemma 4 is a correction of Lemma 7 of \cite{CC20}.
\end{lemma}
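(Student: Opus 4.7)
The plan is to start from the already-established identity in Lemma \ref{def:sigma} and then simply absorb the additional projection onto $\ket{0}_M$ coming from the definition of $\rho_F$ into the list of ancilla qubits that are post-selected. No new circuit construction is needed; the statement is really just a bookkeeping claim about which registers of $\Gamma$ are being treated as ancillas.

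Concretely, I would first recall that by Lemma \ref{def:sigma} we have
\[
\sigma_{M\cup F}=\left(\bra{0}_{B\cup M'\cup F'}\otimes I_{M\cup F}\right)\Gamma\left(\ket{0}_{B\cup M'\cup F'}\otimes I_{M\cup F}\right),
\]
so that applying $\bra{0}_M\,\cdot\,\ket{0}_M$ on both sides gives
\[
\rho_F \;=\; \bra{0}_M\sigma_{M\cup F}\ket{0}_M \;=\; \bigl(\bra{0}_{B\cup M'\cup F'}\otimes\bra{0}_M\otimes I_F\bigr)\,\Gamma\,\bigl(\ket{0}_{B\cup M'\cup F'}\otimes\ket{0}_M\otimes I_F\bigr).
\]
The factorization is valid because the $\ket{0}_M$ projectors act on a distinct register from the $\ket{0}_{B\cup M'\cup F'}$ projectors, so the two tensor factors commute past each other without issue.

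Next I would regroup the post-selected registers as $B\cup M'\cup F'\cup M$, whose total qubit count is $|B|+|M|+|F|+|M|=|B\cup F|+2|M|$ (using that $B$, $M$, $F$ are disjoint and $M'$ is a copy of $M$). This exactly matches the ancilla count claimed in the lemma. Since $\Gamma$ is exactly the same unitary as before, and the identity above expresses $\rho_F$ as $(\bra{0}^{\otimes (|B\cup F|+2|M|)}\otimes I_F)\Gamma(\ket{0}^{\otimes (|B\cup F|+2|M|)}\otimes I_F)$ with no approximation error, we conclude that $\Gamma$ is a $(1,|B\cup F|+2|M|,0)$-block-encoding of $\rho_F$.

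The argument has essentially no technical obstacle: the only place one can go wrong is the ancilla count, which is precisely where Lemma 7 of \cite{CC20} appears to have miscounted. I would therefore close by explicitly highlighting the $2|M|$ term to emphasize that both the original $M$ register (post-selected because $\rho_F$ is defined via $\bra{0}_M\cdot\ket{0}_M$) and its copy $M'$ (post-selected as part of the Lemma 45-style block-encoding construction from \cite{GSLW18}) contribute to the ancilla count, which is the source of the correction.
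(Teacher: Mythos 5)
Your proposal is correct and follows essentially the same route as the paper: starting from the identity of Lemma \ref{def:sigma}, applying the $\bra{0}_M \cdot \ket{0}_M$ projection, regrouping the post-selected ancillas as $B\cup M'\cup F'\cup M$, and counting $|B|+|M'|+|F'|+|M| = |B\cup F|+2|M|$. The paper's proof is the same short chain of equalities, just written in reverse order.
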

\begin{proof}
\begin{align*}
    (&\bra{0}_{B\cup M' \cup F' \cup M}\otimes I_F)\Gamma (\ket{0}_{B\cup M' \cup F' \cup M}\otimes I_F)\\
    &=\bra{0}_M(\bra{0}_{B\cup M' \cup F'}\otimes I_F) \Gamma (\ket{0}_{B\cup M' \cup F'}\otimes I_F)\ket{0}_M\\
    &=\bra{0}_M\sigma_{M\cup F}\ket{0}_M\\
    &=\rho_F.
\end{align*}
\end{proof}

Since $\rho_F$ is the state that we are really interested in, we will henceforth refer to $\Gamma$ as $\Gamma_{\rho_F}$. We can now iteratively apply Lemma 53 from \cite{GSLW18} to obtain a block-encoding for $\rho_F^k$ for any integer $k\ge 1$. To do this, we will need $k-1$ copies of each of the registers $B$, $M'$, $F'$, and $M$. Let $B_1=B$, $M'_1=M'$, $F'_1 = F'$, and $M_1 = M$. Furthermore, for each $i$, $1<i\le k$, let $B_i$, $M'_i$, $F'_i$, $M_i$ be copies of $B$, $M'$, $F'$, and $M$, respectively.

\begin{lemma}[Block-encoding for $\rho_F^k$]\label{lem:powerofrhoF}
     The following is a $(1, k|B\cup M'\cup F'\cup M|, 0)$-block-encoding of $\rho_F^k$:
     \[\Gamma_{\rho_F^k}=
    \prod_{i=1}^k\left(C_{B_i\cup M'_i\cup F'_i}^\dagger \otimes I_{M_i \cup F}\right)\left(I_{B_i} \otimes SWAP_{M_i \cup F,M_i' \cup F_i'}\right)\left(C_{B_i\cup M_i'\cup F_i'}\otimes I_{M_i \cup F}\right).
    \]
    In other words,
    \[
    \rho_F^k = \left(\bra{0}_{\mathcal{B}_k \cup \mathcal{M}_k' \cup \mathcal{F}_k' \cup \mathcal{M}_k} \otimes I_{F}\right)\Gamma_{\rho_F^k} \left(\ket{0}_{\mathcal{B}_k \cup \mathcal{M}_k' \cup \mathcal{F}_k' \cup \mathcal{M}_k} \otimes I_{F}\right)
    \]
    where $\mathcal{B}_k = B_1\cup B_2\cup \cdots \cup B_k$, $\mathcal{M}_k' = M_1'\cup M_2'\cup \cdots \cup M_k'$, etc. Note that this is a correction of equation 7 of \cite{CC20}
\end{lemma}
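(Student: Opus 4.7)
I plan to prove this by induction on $k$, using the product rule for block-encodings (Lemma 53 of \cite{GSLW18}) to combine the individual block-encodings of $\rho_F$ provided by Lemma 4 into a block-encoding of the $k$-fold product $\rho_F^k$. The base case $k=1$ is exactly the statement of Lemma 4. For the inductive step, I would assume that $\prod_{i=1}^{k-1}\Gamma_i$ is a $(1,(k-1)|B\cup M'\cup F'\cup M|,0)$-block-encoding of $\rho_F^{k-1}$, and compose it with the block-encoding $\Gamma_k$ of $\rho_F$ on a fresh copy of the ancilla registers; this yields a block-encoding of $\rho_F^{k-1}\cdot\rho_F=\rho_F^k$, with the ancilla count growing additively to $k|B\cup M'\cup F'\cup M|$, the subnormalization factor remaining $1\cdot 1=1$, and the error remaining $0+0=0$.

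The key point that makes the product rule apply cleanly here is that the registers $B_i,M'_i,F'_i,M_i$ are fresh copies for each $i$, so each $\Gamma_i$ acts as the identity on the ancilla registers of every $\Gamma_j$ with $j\neq i$, while all of the $\Gamma_i$'s share the common ``output'' register $F$. In fact, I plan to give a direct proof that avoids invoking the product lemma as a black box. Writing $\mathcal{A}_k^{(i)} := B_i\cup M'_i\cup F'_i\cup M_i$ and $\bra{\mathbf{0}} := \bigotimes_{i=1}^k\bra{0}_{\mathcal{A}_k^{(i)}}$, the observation to exploit is that $\bra{0}_{\mathcal{A}_k^{(j)}}$ and $\ket{0}_{\mathcal{A}_k^{(j)}}$ commute with $\Gamma_i$ whenever $i\neq j$, because they act on disjoint registers. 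This allows me to telescope the expression by sliding each $\bra{0}_{\mathcal{A}_k^{(i)}}$ (respectively $\ket{0}_{\mathcal{A}_k^{(i)}}$) next to its matching $\Gamma_i$, yielding
\[
(\bra{\mathbf{0}}\otimes I_F)\,\Gamma_{\rho_F^k}\,(\ket{\mathbf{0}}\otimes I_F)=\prod_{i=1}^k(\bra{0}_{\mathcal{A}_k^{(i)}}\otimes I_F)\,\Gamma_i\,(\ket{0}_{\mathcal{A}_k^{(i)}}\otimes I_F)=\prod_{i=1}^k\rho_F=\rho_F^k,
\]
where the second equality uses Lemma 4 applied to each factor.

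The main obstacle will be careful bookkeeping: tracking the many fresh copies of the ancilla registers and verifying that the commutations of bras and kets past the $\Gamma_i$'s are indeed valid on disjoint registers. There are no analytic difficulties, since each $\Gamma_i$ is an exact block-encoding (error $0$) with subnormalization $1$, so the parameters compose trivially and no error accumulates across the $k$ applications.
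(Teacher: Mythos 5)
Your proposal is correct and matches the paper's (brief) justification, which simply says to iterate Lemma 53 of \cite{GSLW18}; your induction-plus-product-rule plan is exactly that. The direct telescoping argument you add is a valid unpacking of the same idea (and does work, since after contracting $\Gamma_i$ to $\rho_F$ on $F$, that factor commutes with the remaining $\ket{0}_{\mathcal{A}_k^{(j)}}$ on disjoint ancilla registers), so it is a nice self-contained alternative but not a genuinely different route.
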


\begin{lemma}[Block-encoding for $\rho_B^k$]\label{lem:powerofrhoB}
     Analogously, the following is a $(1, k|F\cup M'\cup B'\cup M|, 0)$-block-encoding of $\rho_B^k$:
     \[\Gamma_{\rho_B^k}=
    \prod_{i=1}^k\left(C_{B'_i\cup M'_i\cup F_i}^\dagger \otimes I_{M_i \cup B}\right)\left(I_{F_i} \otimes SWAP_{M_i \cup B,M_i' \cup 
    B_i'}\right)\left(C_{B'_i\cup M'_i\cup F_i}\otimes I_{M_i \cup B}\right).
    \]
    In other words,
    \[
    \rho_B^k = \left(\bra{0}_{\mathcal{F}_k \cup \mathcal{M}_k' \cup \mathcal{B}_k' \cup \mathcal{M}_k} \otimes I_{B}\right)\Gamma_{\rho_B^k} \left(\ket{0}_{\mathcal{F}_k \cup \mathcal{M}_k' \cup \mathcal{B}_k' \cup \mathcal{M}_k} \otimes I_{B}\right)
    \]
    where $\mathcal{F}_k = F_1\cup F_2\cup \cdots \cup F_k$, $\mathcal{M}_k' = M_1'\cup M_2'\cup \cdots \cup M_k'$, etc. \\
    Note that this is a correction for equation 7 of \cite{CC20}
\end{lemma}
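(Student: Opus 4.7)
The plan is to mirror the three-step construction already used for $\rho_F^k$, exchanging the roles of the $B$ and $F$ registers throughout. The construction is essentially symmetric, since $\sigma_{M\cup F}$ and $\sigma_{M\cup B}$ differ only by which of $B$ or $F$ is traced out, and the post-selection that extracts $\rho_F$ or $\rho_B$ from the corresponding $\sigma$ is the same operation on the $M$ register.

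First, I would establish a $(1, |F\cup M\cup B|, 0)$-block-encoding of $\sigma_{M\cup B}=\tr_F\!\left(C_{B\cup M\cup F}\ketbra{0}{0}_{B\cup M\cup F}C_{B\cup M\cup F}^\dagger\right)$ by the direct analogue of Lemma \ref{def:sigma}: introduce primed copies $B'$ and $M'$ of $B$ and $M$, run $C$ on $B'\cup M'\cup F$, apply the $SWAP$ exchanging $M\cup B$ with $M'\cup B'$, and then run $C^\dagger$ on $B'\cup M'\cup F$. Verifying the block-encoding identity is the same short computation as in Lemma \ref{def:sigma} with the trace over $F$ in place of the trace over $B$. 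I would then promote this to a $(1, |F\cup B|+2|M|, 0)$-block-encoding of $\rho_B=\bra{0}_M \sigma_{M\cup B}\ket{0}_M$ by moving the $M$ register into the set of post-selected registers, exactly as is done in the preceding $\rho_F$ lemma; the calculation is the same three-line display as in that proof, with $F$ and $B$ swapped.

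Next, I would iteratively apply Lemma 53 of \cite{GSLW18}, which produces a block-encoding of a product from block-encodings of the factors by concatenating ancilla registers, exactly $k-1$ times. Using fresh copies $B_i'$, $M_i'$, $F_i$, $M_i$ at the $i$-th application yields precisely the stated product formula for $\Gamma_{\rho_B^k}$. The ancilla count adds across the $k$ factors to give $k|F\cup M'\cup B'\cup M|$, the subnormalization stays at $1$, and the error stays at $0$, matching the statement.

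The only real obstacle is register bookkeeping: making sure that the registers being traced, post-selected, and carried as the ``output'' register line up consistently between the $\sigma$-level statement, the $\rho$-level statement, and the $k$-fold product, and that the local interleaving which kept $\Gamma$ geometrically local in the $\rho_F$ case is adapted to the $\rho_B$ case by interleaving $B'$ with $B$ rather than $F'$ with $F$. No new estimates are required beyond those already used in the $\rho_F$ chain of lemmas.
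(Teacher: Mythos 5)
Your proposal is correct and follows exactly the route the paper takes (which it leaves implicit by saying ``Analogously''): build the $\sigma_{M\cup B}$ block-encoding as the mirror image of Lemma~\ref{def:sigma} with $F$ traced out and primed copies $B', M'$ of $B, M$; post-select the $M$ register to drop down to a $(1, |F\cup B|+2|M|, 0)$-block-encoding of $\rho_B$ as in Lemma~4; then iterate Lemma~53 of \cite{GSLW18} with fresh ancilla registers $F_i, M_i', B_i', M_i$ to obtain $\Gamma_{\rho_B^k}$ with the claimed $(1, k|F\cup M'\cup B'\cup M|, 0)$ parameters. Your remark on adapting the register interleaving for geometric locality (interleave $B'$ with $B$ and $M'$ with $M$ rather than $F'$ with $F$) is also the right observation and matches what the paper intends for the analogous padding argument.
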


Importantly, we are free to interleave all of the copies of the registers $B$, $M$ and $F$ with their originals. We do this in such a way so that we can minimally pad each 2-qubit gate from $C_{B\cup M\cup F}$ with swap gates so that this new 'padded' circuit is still geometrically local. Furthermore, the depth of this new padded circuit is at most $(2k+1)$ times the original depth of $C$.

\begin{definition}[Synthesis] \label{def:synth}
 	We say that an unnormalized quantum state $\phi$ is \emph{synthesized} by a quantum circuit $\Gamma$, if $\Gamma$ has three registers of qubits $L, M, N$ such that:
 	
 	\begin{align}
 	\phi = \phi_{(\Gamma, L, M, N)}= \tr_{L\cup M}(  \bra{0_M}\Gamma\ket{0_{L \cup M \cup N}}\bra{0_{L \cup M \cup N}}\Gamma^{\dagger}\ket{0_M}).
 	\end{align}

 	In this case we say that the circuit $\Gamma$ together with a specification of the registers $L,M,N$ constitutes a \emph{synthesis} of $\phi$.  When $\phi$ is implicit we will call this collection $(\Gamma, L, M, N)$ a \emph{synthesis}. This definition was taken directly from \cite{CC20} and is only here for the convenience of the reader. All syntheses explicitly used in the rest of this paper are defined in section 4 of \cite{CC20}.
 \end{definition}

\section{Algorithms and Analysis} \label{sec:algorithms}

Having discussed the essential concepts of syntheses and block-encodings in Section \ref{sec:preliminaries} above, we now give an explicit description of our classical simulation algorithm below.  Our algorithm is divided into two pieces, Algorithm \ref{alg:constant-width-assumption-driver2} and Algorithm \ref{alg:constant-width-assumption2}.  Algorithm \ref{alg:constant-width-assumption-driver2} simply handles some technical edge cases for the error parameter $\delta$, and sets the stage for making a call to Algorithm \ref{alg:constant-width-assumption2}.  Algorithm \ref{alg:constant-width-assumption2} contains the actual divide-and-conquer structure, describing how to perform recursive calls to itself and Algorithm \ref{alg:constant-width-assumption-driver2} in one dimension lower.

The following theorem and lemmas state and prove our main result by giving runtime bounds and error bounds for Algorithm \ref{alg:constant-width-assumption-driver2}. Algorithm \ref{alg:constant-width-assumption-driver2} is defined in complete pseudo-code below, for any dimension $D$, and our main result is proved by induction on dimension $D$.

 \begin{theorem} \label{thm:inductivestep}
    For any $D$-dimensional geometrically-local, depth $d$ quantum circuit $C$ acting on $n$ qubits, the algorithm $\mathcal{A}_{full}(S=(C,L,M,N), \mathcal{B}, \delta, D)$ computes the quantity  $|\bra{0^{\otimes n}}C\ket{0^{\otimes n}}|^2$ to within $\delta = 1 / n^{\log(n)}$ additive error in time
    $2^{O((d\polylog(n))^{D \cdot 3^{D}})}$. Furthermore, let $w_{D + 1}, w_{D + 2}, \ldots$ be the widths of the qubit array in dimensions $D + 1, D + 2, \ldots$ respectively. Then for any geometrically-local, depth d quantum circuit $C$ acting on a lattice of $n$ qubits having side length at most $w_{D+i}$ in dimension $D+i$, the algorithm $\mathcal{A}_{full}(S=(C,L,M,N), \mathcal{B}, \delta, D)$ computes the quantity $|\bra{0^{\otimes n}}C\ket{0^{\otimes n}}|^2$ to within $\delta = 1 / n^{\log(n)}$ additive error in time $2^{O((d\polylog(n))^{D3^{D}} w^{1/3})}$, where $w \equiv \prod_{i=1}^{\infty}w_{D+i}$.\footnote{We assume the $n$ qubits are arranged in such a way that the length of each edge of the qubit lattice is $O(n^{1/D})$}
    \end{theorem}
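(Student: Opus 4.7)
The plan is to prove Theorem \ref{thm:inductivestep} by induction on the dimension $D$, with base case $D=2$ supplied by the algorithm $\mathcal{B}$ of \cite{BGM19}. For $D=2$, Algorithm \ref{alg:constant-width-assumption-driver2} simply calls $\mathcal{B}$, which runs in polynomial time on any 2D geometrically-local circuit of depth $d = \polylog(n)$; the extra $w^{1/3}$ factor in the exponent accounts for absorbing a lattice of small side lengths $w_{3}, w_{4}, \ldots$ into an effective 2D instance by treating each column along the absorbed dimensions as a single qudit of dimension $2^{w}$, which blows up local gate complexity by a factor that is at most singly exponential in $w$.

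For the inductive step I would assume the theorem for dimension $D-1$ and analyze the divide-and-conquer routine in Algorithm \ref{alg:constant-width-assumption2} on a $D$-dimensional instance. The algorithm bisects the qubit lattice along its longest dimension to produce a cut $B \cup M \cup F$, and uses the block-encodings $\Gamma_{\rho_F^k}$ and $\Gamma_{\rho_B^k}$ of Lemmas \ref{lem:powerofrhoF} and \ref{lem:powerofrhoB} together with the Taylor-expansion machinery of \cite{CC20} to express $|\bra{0^{\otimes n}}C\ket{0^{\otimes n}}|^2$ as a polynomial in output probabilities of $D$-dimensional subcircuits on roughly half the qubits. Each such subproblem is packaged as a new synthesis $(\Gamma', L', M', N')$ of the type defined in Definition \ref{def:synth}, and the algorithm recurses on it. Bisection continues along the longest dimension until the side length $w_D$ in the $D$-th dimension shrinks below a threshold, at which point those qubits are folded into the qudit alphabet and the problem is passed to the inductive hypothesis at dimension $D-1$, with the folded width appearing in the second bound of the theorem via the $w^{1/3}$ factor.

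The error analysis inherits almost verbatim from Section 5 of \cite{CC20}. The crucial point is that the quantities controlling the truncation error in the Taylor expansion of $\rho_F^k$ and $\rho_B^k$, namely the operator norms coming from the light-cone and the boundary size of the slab $M$, depend on dimension only through polynomial factors in $d\polylog(n)$ that can be absorbed into the $(d\polylog(n))^{D \cdot 3^D}$ bound in the runtime. I would verify that the choice of expansion order $k$ and middle-region thickness $|M|$ from \cite{CC20} can be taken uniformly in $D$, with a modest rescaling of constants, so that total additive error remains below $\delta = 1/n^{\log n}$ after $O(\log n)$ levels of halving recursion at fixed dimension plus $D$ levels of dimensional descent.

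Finally I would solve the resulting runtime recurrence, schematically
\[
T_D(n) \;\leq\; (d\polylog(n))^{O(1)} \cdot T_D(n/2)^{\,O((d\polylog(n))^{3})} \;+\; T_{D-1}(\poly(n)),
\]
where the polynomial-in-$k$ branching inside Algorithm \ref{alg:constant-width-assumption2} contributes a factor of roughly $3$ to the exponent of $d\polylog(n)$ per dimension of recursion, yielding the triply-exponential-in-$D$ bound $2^{O((d\polylog(n))^{D \cdot 3^D})}$. The main obstacle I foresee is checking that none of the geometric estimates in \cite{CC20} that were stated for $D=3$ (areas of cuts, causal-cone volumes, and the bounds relating the trace distance $\norm{\rho_F^k - \widetilde{\rho}_F^k}_1$ to the truncation parameter) pick up hidden super-polynomial $D$-dependence; showing that the dependence remains at most $\poly(D)$ in the base and can be absorbed into the $3^D$ tower is the key verification that makes the induction close.
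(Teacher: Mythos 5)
Your proposal takes essentially the same route as the paper: induction on $D$, with the inductive step deferred to a runtime lemma and an error lemma that track how Algorithm~\ref{alg:constant-width-assumption2} recursively slices the $D$-dimensional lattice and hands subproblems down to $\mathcal{A}_{full}$ at dimension $D-1$. Two deviations are worth flagging.

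First, you take $D=2$ (via $\mathcal{B}$ from \cite{BGM19}) as the base case, whereas the paper takes $D=3$, treating it as a direct consequence of the main theorem of \cite{CC20}. Both are serviceable; the paper's choice is slightly cleaner because CC20 already established the full bookkeeping for $D=3$, while your choice forces you to re-verify that the $D=2$ call returns within the claimed bound (which it does, since the contribution of $\mathcal{B}$ appears as the $n^D \mathcal{B}(n,d,O(wd^D), E_1^{(D-2)}(\delta))$ term in the paper's unrolled recurrence --- note that for $d=\polylog(n)$ this is quasi-polynomial, not ``polynomial time'' as you write).

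Second, your schematic runtime recurrence
\[
T_D(n) \le (d\polylog(n))^{O(1)} \cdot T_D(n/2)^{O((d\polylog(n))^3)} + T_{D-1}(\poly(n))
\]
has the wrong form: you exponentiate $T_D(n/2)$, but the actual divide-and-conquer step produces a \emph{sum} of $O(\Delta)$ two-factor products of recursive calls, each on an instance of side length $\tfrac{3}{4}\ell$, so the genuine recurrence is multiplicative in the branching factor, roughly $T_2(\ell) \lesssim 2\Delta\, T_2(\tfrac{3}{4}\ell) + \Delta^2 2^{\Delta} T_1(\ell, D-1, d^3\polylog(n), O(wd), \cdot) + \ldots$, solved over $\eta = O(\log n)$ levels to give a $(2\Delta)^\eta = 2^{\polylog(n)}$ prefactor. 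If one actually solved your proposed recurrence one would get a bound much worse than $2^{O((d\polylog(n))^{D\cdot 3^D})}$. Since you flagged the recurrence as schematic this is not a fatal gap in intent, but it is the piece of the argument whose precise form is load-bearing, so it cannot be left loose: the $3^D$ tower arises specifically because each dimensional descent cubes the block-encoding depth (the $f(d) = d^3\polylog(n)$ composition in the paper), and that mechanism is not visible in your recurrence. You do correctly identify the remaining key verification, namely that the geometric/error estimates of \cite{CC20} carry only $\poly$-in-$(d\polylog n)$ dependence per dimension, which the paper handles in Lemmas~\ref{thm:mainruntimelemma} and \ref{thm:mainerrorlemma}.
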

\begin{proof}
We will prove Theorem \ref{thm:inductivestep} by induction on the dimension $D$.  For the base-case, $D=3$, this theorem is a direct consequence of the main result of \cite{CC20}.  For $D > 3$, assuming, by induction, that we have already established Theorem \ref{thm:inductivestep} for dimension $D-1$, the dimension $D$ version of the Theorem follows by Lemmas \ref{thm:mainruntimelemma} and \ref{thm:mainerrorlemma} respectively.  The key inductive step in those two analyses happens at the point in the analysis where Algorithm \ref{alg:constant-width-assumption2} makes calls, such as $\mathcal{A}_{full}(S_{i,j}, \mathcal{B},\epsilon, D-1)$, to a $D-1$ dimensional version of Algorithm \ref{alg:constant-width-assumption-driver2}.  At those points the runtime and error guarantees for the $D-1$ dimensional version of $\mathcal{A}_{full}$ that are required by the analyses in Lemmas \ref{thm:mainruntimelemma} and \ref{thm:mainerrorlemma} are ensured by the inductive assumption that Theorem \ref{thm:inductivestep} already holds for the $D-1$ dimensional case.
\end{proof}

\begin{lemma} \label{thm:mainruntimelemma}
Let $w$ be defined as in Theorem \ref{thm:inductivestep}. Then $\mathcal{A}_{full}(S=(C,L,M,N), \mathcal{B}, \delta, D)$ runs in time
    $\delta^{-2} \cdot2^{O((d\polylog(n))^{D \cdot 3^{D}}w^{1/3}) (1/\delta)^{1 / \log^2(n)}}$.
\end{lemma}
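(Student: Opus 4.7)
The plan is to track the runtime through the two-level structure of Algorithms \ref{alg:constant-width-assumption-driver2} and \ref{alg:constant-width-assumption2}. I would first separate the dependence on $\delta$ from the rest of the analysis by noting that the driver $\mathcal{A}_{full}$ is designed so that Algorithm \ref{alg:constant-width-assumption2} is invoked at a canonical internal precision of roughly $1/n^{\log(n)}$. The driver reduces a general target error $\delta$ to this canonical precision by aggregating $O(1/\delta^2)$ independent runs through a Hoeffding-style mean estimator, which is where the prefactor $\delta^{-2}$ originates. The exponent factor $(1/\delta)^{1/\log^2(n)}$ reflects the fact that several internal parameters of the divide-and-conquer, most notably the truncation degree $k$ in the block-encodings $\Gamma_{\rho_F^k}$ and $\Gamma_{\rho_B^k}$ of Lemmas \ref{lem:powerofrhoF} and \ref{lem:powerofrhoB}, are polylogarithmic in $1/\delta$ and enter the runtime as a multiplicative factor in the exponent; raising them to accommodate smaller $\delta$ contributes exactly the $(1/\delta)^{1/\log^2(n)}$ factor, which reduces to an $O(1)$ contribution at the canonical $\delta = 1/n^{\log n}$.

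Next I would analyze Algorithm \ref{alg:constant-width-assumption2} in dimension $D$ as a recursion on the width $w_D$ of the qubit array in the $D$-th dimension. At each level the algorithm cuts the circuit $C$ along a $(D-1)$-dimensional slab, constructs the block-encodings of $\rho_F^k$ and $\rho_B^k$ as in Lemmas \ref{lem:powerofrhoF}--\ref{lem:powerofrhoB} with the appropriate interleaving to preserve geometric locality, and produces $O(\polylog(n))$ subproblems on circuits of roughly half the $D$-th-dimensional width. These are handled by recursive calls to Algorithm \ref{alg:constant-width-assumption2} itself, producing a recursion tree of depth $O(\log w_D)$. Once the width in dimension $D$ has been driven below a constant threshold, the remaining problem is treated as a $(D-1)$-dimensional one by absorbing the residual width into the qudit structure, and is dispatched to $\mathcal{A}_{full}$ in dimension $D-1$.

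Closing the induction requires combining the per-level overhead of the dimension-$D$ recursion with the inductive guarantee for dimension $D-1$. By the induction hypothesis on Lemma \ref{thm:mainruntimelemma}, each dimension-$(D-1)$ call costs at most $\delta^{-2} \cdot 2^{O((d\polylog(n))^{(D-1)\cdot 3^{D-1}} (w')^{1/3})(1/\delta)^{1/\log^2(n)}}$, where $w'$ now includes the absorbed $D$-th-dimensional width and agrees, up to constants, with the $w$ of the present lemma. Each level of the dimension-$D$ recursion contributes a $(d\polylog(n))^{O(1)}$ multiplicative blow-up of the depth of the synthesized circuits, from the $(2k+1)$-fold padding discussed after Lemma \ref{lem:powerofrhoB}, together with a constant-power increase in the number of subproblems. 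Compounded across $O(\log w_D)$ levels and then fed into the base exponent, this turns $(d\polylog(n))^{(D-1)\cdot 3^{D-1}}$ into $(d\polylog(n))^{D \cdot 3^{D}}$, yielding the stated bound.

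The main obstacle I anticipate is verifying that the per-level overhead in Algorithm \ref{alg:constant-width-assumption2} really compounds only into a constant-power contribution in the exponent of $d\polylog(n)$ per dimension, rather than an additional $\log w_D$-power contribution that would break the clean $D \cdot 3^D$ recurrence. This amounts to checking that the depth of each padded synthesized circuit grows by at most a constant factor per recursion level, that $k$ stays polylogarithmic in $n$ and $1/\delta$ throughout, and that the $w^{1/3}$ factor is preserved as the $D$-th-dimensional width is folded into the next-lower dimension, mirroring the $D = 3$ accounting of \cite{CC20} but now carried out uniformly in $D$.
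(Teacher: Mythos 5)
Your overall plan --- induct on dimension, track a width recursion inside Algorithm~\ref{alg:constant-width-assumption2}, and account for depth/thickness blow-up when passing to dimension $D-1$ --- matches the paper's proof in spirit, but several of the specific mechanisms you propose are not what the paper does, and two of them are wrong in ways that would break the argument.

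First, the $\delta^{-2}$ prefactor does not come from ``aggregating $O(1/\delta^2)$ independent runs through a Hoeffding-style mean estimator.'' Both Algorithm~\ref{alg:constant-width-assumption-driver2} and Algorithm~\ref{alg:constant-width-assumption2} are deterministic; nothing is sampled. In the paper's recurrence the $\delta^{-2}$ originates from the runtime of the BGM base-case algorithm (Theorem~5 of \cite{BGM19}), which already carries a $\delta^{-2}$ dependence, combined with the bound $\bigl((E_1^{(D)}\circ E_5^{(D)})(\delta)\bigr)^{-2} \le 2^{D\polylog(n)}(\cdots)\,\delta^{-2}$ showing that composing the error maps $E_1,E_2,E_3$ along the recursion only degrades $\delta$ by factors absorbable into the $2^{O((d\polylog(n))^{D3^D}w^{1/3})}$ term.

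Second, the $(1/\delta)^{1/\log^2(n)}$ factor is not caused by a truncation degree $k$ in $\Gamma_{\rho_F^k}$ depending on $\delta$. In Algorithm~\ref{alg:constant-width-assumption2} the parameters $K$ and $T$ are fixed to $\log^3(n)$, independently of $\delta$. That factor is there to dominate the brute-force branch on Line~\ref{ln:checkforsmalldelta}: when $\delta \le 1/n^{\log^2(n)}$ the algorithm computes the amplitude exactly in $2^{O(n)}$ time, and $(1/\delta)^{1/\log^2(n)} \ge n$ makes the stated exponent at least linear in $n$, so the bound still holds. Your attribution would imply $k$ grows with $1/\delta$, which contradicts the pseudocode.

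Third, and most seriously for the recurrence, you flag the right worry --- that a per-level depth blow-up compounded over $O(\log w_D)$ width-halving levels would give a $\log(n)$-power in the exponent of $d\polylog(n)$ and destroy the $D\cdot 3^D$ recurrence --- but your proposed resolution (``verify the per-level overhead compounds only into a constant power'') is not how the paper avoids this. In the paper's recurrence $T_2(l,D,d,w,\epsilon) < 2\Delta\, T_2(\tfrac34 l, D, d, w, \epsilon) + \cdots$, the recursive calls of $T_2$ to itself keep the \emph{same} depth $d$ and thickness $w$; depth only jumps (to $d^3\polylog(n)$, via $f(d)=d^3\polylog(n)$) when a subproblem is handed off to dimension $D-1$. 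So there is no compounding over the $O(\log w_D)$ width levels at all; compounding occurs only once per dimension drop, giving $f^{(k)}(d) < d^{3^k}(\polylog(n))^{3^k}$ and ultimately the $D3^D$ exponent. Related to this, the stopping width $w_0 = 20d(\Delta + h(n) + 2)$ is polylogarithmic in $n$, not constant.

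Finally, you omit a structurally important piece: the driver on Lines~\ref{ln:startslicesearch}--\ref{ln:doubleendslicesearch} makes $\tfrac{n^{1/D}}{10d}$ calls to $\mathcal{A}_{full}$ in dimension $D-1$ just to identify $K_{heavy}$. This contributes the term $\tfrac{n^{1/D}}{10d} T_1(l, D-1, d, O(wd), E_1(\delta))$ to the $T_1$ recurrence and cannot be ignored; it is one of the places where the $w$-thickness is passed to the lower-dimensional analysis.

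To align with the paper you would want to set up the two mutually recursive runtime bounds $T_1, T_2$ explicitly, observe that the width recursion of $T_2$ does not alter depth or thickness, unroll across dimensions tracking $f^{(k)}(d)$ and $g^{(k)}(d,w)$, and separately justify the $\delta^{-2}$ and $(1/\delta)^{1/\log^2(n)}$ factors by the BGM base case and the small-$\delta$ brute-force branch respectively.
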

\begin{proof}
Refer to Appendix \ref{appendix:lemma-proofs} for proof.
\end{proof}

\begin{lemma} \label{thm:mainerrorlemma}
    $\mathcal{A}_{full}(S=(C,L,M,N), \mathcal{B}, \delta, D)$ returns an $\delta$-additive error approximation of $|\bra{0_{ALL}}C\ket{0_{ALL}}|^2$ 
\end{lemma}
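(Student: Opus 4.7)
The plan is to prove Lemma \ref{thm:mainerrorlemma} by induction on the dimension $D$, in tandem with the induction used in Theorem \ref{thm:inductivestep}. For the base case $D=3$ the statement follows directly from the main result of \cite{CC20}, since at $D=3$ the recursion in $\mathcal{A}_{full}$ bottoms out in the CC20 algorithm itself. For the inductive step I would assume correctness for dimension $D-1$ and trace through what $\mathcal{A}_{full}(S,\mathcal{B},\delta,D)$ computes.

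First I would reproduce the algebraic identity of section 4 of \cite{CC20} that rewrites $|\bra{0_{ALL}}C\ket{0_{ALL}}|^2$ as a finite linear combination of quantities of the form $\tr(\Pi\,\rho_F^{k_1}\rho_B^{k_2})$, plus a truncation tail arising from the polynomial approximation of the projector onto $\ket{0_M}$ in the slab-cut expansion. Because the block-encodings $\Gamma_{\rho_F^k}$ and $\Gamma_{\rho_B^k}$ of Lemmas \ref{lem:powerofrhoF} and \ref{lem:powerofrhoB} are exact $(1,\cdot,0)$-block-encodings, the only error entering at this stage is the polynomial truncation error, which can be made smaller than $\delta/2$ by choosing the truncation degree as in \cite{CC20}.

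Next I would bound the error contributed by the recursive evaluation of each surviving term. After a slab cut along dimension $D$, Algorithm \ref{alg:constant-width-assumption2} absorbs the thin $D$-th-direction width of the slabs into the local qudit dimension, producing $(D-1)$-dimensional synthesis subproblems $S_{i,j}$ that are then passed to $\mathcal{A}_{full}(S_{i,j},\mathcal{B},\epsilon,D-1)$. By the inductive hypothesis each such call returns an $\epsilon$-accurate estimate. Summing these errors against the weights in the finite expansion gives a total approximation error bounded by $N_{\mathrm{sub}}\cdot\epsilon\cdot W_{\max}$ plus the truncation tail, where $N_{\mathrm{sub}}$ is the number of subproblems and $W_{\max}$ is the maximum absolute coefficient. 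Algorithm \ref{alg:constant-width-assumption2} sets $\epsilon$ small enough that this sum is at most $\delta$; these are the same parameter choices that drive the runtime bound in Lemma \ref{thm:mainruntimelemma}.

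The main obstacle will be verifying that every quantitative ingredient of the CC20 error analysis still applies when the recursive leaf is itself an invocation of $\mathcal{A}_{full}$ at dimension $D-1$ rather than a 2D MPS base case. In particular, I would need to re-check that (i) the geometric-locality-preserving reorganization of primed and unprimed registers described after Lemma \ref{def:sigma} continues to yield a geometrically-local $(D-1)$-dimensional circuit once the thin slab has been absorbed into the qudit structure, and (ii) the spectral and coefficient bounds used to truncate the $\rho_F^k / \rho_B^k$ expansion depend only on the depth $d$ and qubit count $n$, not on the dimension $D$. Once both are confirmed, the remaining accounting reduces to substituting the inductive bound from Theorem \ref{thm:inductivestep} into the CC20 error analysis, with the detailed parameter tracking deferred to Appendix \ref{appendix:lemma-proofs}.
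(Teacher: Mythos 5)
There is a genuine gap, and it is in the heart of the error accounting.

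You describe Algorithm~\ref{alg:constant-width-assumption2} as producing $(D-1)$-dimensional subproblems and handing \emph{all} of them to $\mathcal{A}_{full}(\cdot,\mathcal{B},\epsilon,D-1)$, whose accuracy you then control by the dimension-$(D-1)$ inductive hypothesis, giving a one-shot bound of the form $N_{\mathrm{sub}}\cdot\epsilon\cdot W_{\max}$ plus a truncation tail. But that is not what the algorithm does. Only the middle pieces $S_{i,j}$ (and the $\phi_{i,j}$ terms from the inclusion-exclusion over $\sigma$) drop a dimension. The outer pieces $S_{L,i}$ and $S_{j,R}$ are handed back to $\mathcal{A}(\cdot,\eta-1)$ at the \emph{same} dimension $D$, and these recursive self-calls are the whole point of the divide-and-conquer: they keep cutting at dimension $D$ until the stopping width or $\eta<1$ is reached. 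Your bound therefore has no mechanism for how the error from those calls compounds across the $\eta_D = \frac{\log n}{D\log(4/3)}$ levels of recursion. In the paper, this compounding is captured by a recurrence on the accumulated error $f(S,\eta,\Delta,\epsilon,D)$ of the form
\[
f(S,\eta,\Delta,\epsilon,D)\;\leq\;(2e(n)+2g(n))^{\Delta}+3\Delta^2\,\mathcal{E}_3(n,K,T,\epsilon,\Delta)+20\Delta^2\,f(S,\eta-1,\Delta,\epsilon,D),
\]
established via three cut-specific lemmas (the analogues of Lemmas 29--31 of \cite{CC20}), and then unrolled to give the crucial amplification factor $\eta_D(20\Delta^2)^{\eta_D}\approx 2^{O(\log n\log\log n)}$. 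The internal error budget $\epsilon=\delta\,2^{-10\log n\log\log n}$ and the choices $\Delta=\log n$, $K=T=\log^3 n$, $h(n)=\log^7 n$ are precisely tuned to absorb that amplification, together with the heaviness condition that keeps each $\kappa^i_{T,\epsilon}$ bounded away from zero. None of this machinery appears in your argument, and a naive "number of subproblems times per-call error" bound would not survive the geometric blowup.

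Two smaller remarks. First, the decomposition you invoke ($\tr(\Pi\,\rho_F^{k_1}\rho_B^{k_2})$) is not the form the algorithm actually evaluates; the relevant expansion is over states $\ket{\Psi_\sigma}$ and sub-syntheses $S_{L,i},S_{i,j},S_{j,R}$ with $\Pi^K_F$-projector insertions and the $\kappa^i_{T,\epsilon}$ normalizations, so the "weights" $W_{\max}$ and "truncation tail" you gesture at would need to be made precise in those terms. Second, the first three cases of the driver algorithm (lines~\ref{ln:checkforsmalldelta}, \ref{ln:deltalarge}, \ref{ln:easyifalg1}) need to be dispatched separately, which the paper does briefly but your proof omits entirely. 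Your framing as an induction on $D$ that piggybacks on Theorem~\ref{thm:inductivestep} is correct, and the two items you flag for verification (geometric locality of the reorganized registers, dimension-independence of the cut-term bounds) are indeed the right things to check; what is missing is the recurrence-and-unroll argument that carries the bulk of the weight.
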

\begin{proof}
Refer to Appendix \ref{appendix:lemma-proofs} for the proof.
\end{proof}

\newpage

\begin{algorithm}[H] \label{alg:constant-width-assumption-driver2}  \label{alg:quasi-poly-driver}
	\SetKwInOut{Input}{Input}\SetKwInOut{Output}{Output}
	\Input{Synthesis S = (C, L, M, N) where $C$ is a D-Dimensional Geometrically-Local \depthname circuit, $\mathcal{B}$ a base case algorithm for 2D circuits, approximation error $\delta$, dimension $D$}
	\Output{An approximation of  $|\bra{0_{ALL}}C\ket{0_{ALL}}|^2$ to within additive error $\delta$.} 
	
	\tcc{We begin by handling the case in which $\delta$ is so small that it trivializes our runtime, and the case in which $\delta$ is so large that it causes meaningless errors:}
	
	\If{$\delta \leq 1/n^{\log^2(n)}$  \label{ln:checkforsmalldelta}}{\Return The value $|\bra{0_{ALL}}C\ket{0_{ALL}}|^2$ computed with zero error by a ``brute force" $2^{O(n)}$-time algorithm. }
		
		\textbf{if}  $\delta \geq 1/2$ \label{ln:deltalarge} \textbf{then}  \textbf{return} $1/2$
	
	\If{D = 2}{\Return $\mathcal{B}(S, \delta)$}
	\tcc{Here begins the non-trivial part of the algorithm:}

	Let $N$ be the register containing all of the qubits on which $C$ acts.  Since these qubits are arranged in a hyper-cubic lattice, the sides of the hyper-cube $N$ must have length $n^{\frac{1}{D}}$.  We will call the length of this side the ``width" and will now describe how to ``cut" the hyper-cube $N$, and the circuit $C$, perpendicular to this particular side.    
	
	Select $\frac{1}{10d}n^{\frac{1}{D}}$  light-cone separated slices $K_i$ of $10d$ width in $N$, with at most 10$d$ distance between adjacent slices. \label{ln:startslicesearch}  Let $h(n) = \log^7(n)$.  Run Algorithm  $\mathcal{A}_{full}(S,\mathcal{B}, 2^{\frac{\log(\delta)}{2h(n)} - 1} - 2^{\frac{\log(\delta)}{h(n)} - 1}, D-1)$ to check if at least $\frac{1}{10d}n^{\frac{1}{D}} - h(n)$ of the slices obey:  \label{ln:endslicesearch} 
	
	\begin{equation} 
	\abs{\tr \left (\bra{0_{M_i}}C\ket{0_{ALL}}\bra{0_{ALL}}C^{\dagger}\ket{0_{M_i}}\right)}  \geq 2^{\frac{\log(\delta)}{h(n)}}. \nonumber 
	\end{equation}

	OR, there are fewer than $\frac{1}{10d}n^{\frac{1}{D}} - h(n)$ slices that obey: \label{ln:doubleendslicesearch}

\begin{equation} 
\abs{\tr \left (\bra{0_{M_i}}C\ket{0_{ALL}}\bra{0_{ALL}}C^{\dagger}\ket{0_{M_i}}\right)}  \geq  2^{\frac{\log(\delta)}{h(n)}}. \nonumber \label{eq:heavyweight8}
\end{equation}

\tcc{  See the runtime analysis in the proof of Theorem 28 of \cite{CC20} for a detailed explanation of how the aforementioned run of $\mathcal{A}_{full}$ can efficiently distinguish between the above two cases (via Remark 6 in \cite{CC20}).}
	
		\textbf{if}  Fewer than $\frac{1}{10d}n^{\frac{1}{D}} - h(n)$ of the slices obey Line \ref{eq:heavyweight8} \label{ln:easyifalg1} \textbf{then return} 0  
	
	\If{At least $\frac{1}{10d}n^{\frac{1}{D}} - h(n)$ of the slices obey Line \ref{eq:heavyweight8} \label{ln:hardifalg1}}{

		We will denote the set of these slices by $K_{heavy}$.  Note that the maximum amount of width between any two adjacent slices in $K_{heavy}$ is $10d \cdot h(n)$.  Furthermore, the maximum amount of width collectively between $\Delta$ slices in $K_{heavy}$ is $10d\Delta + 10d \cdot h(n)$.  Now that the set $K_{heavy}$ has been defined, we will use this fixed set in the recursive algorithm, Algorithm \ref{alg:constant-width-assumption2}.
		
		\Return $\mathcal{A}(S, \eta = \frac{\log(n)}{D\log(4/3)} , \Delta = \log(n), \epsilon = \delta2^{-10 \log(n) \log(\log(n)))} , h(n) = \log^7(n), K_{heavy} , D,  \mathcal{B})$
	}

	\caption{$\mathcal{A}_{full}(S=(C,L,M,N), \mathcal{B}, \delta, D)$:  Quasi-Polynomial Time Additive Error Approximation for $|\bra{0_{ALL}}C\ket{0_{ALL}}|^2$.  }
\end{algorithm}

\begin{algorithm}\label{alg:constant-width-assumption2}\label{alg:quasi-poly-subroutine}
	\SetKwInOut{Input}{Input}\SetKwInOut{Output}{Output}
	\Input{D-dimensional Geometrically-Local, \depthname synthesis $S$, number of iterations $\eta$, number of cuts $\Delta$, positive base-case error bound $\epsilon > 0$, a set of heavy slices $K_{heavy}$, dimension $D$, $\mathcal{B}$ a base case algorithm for 2D circuits}
	\Output{An approximation of the quantity $\bra{0_{N}}\phi_S\ket{0_{N}}$ where $\phi_S$ is the un-normalized mixed state specified by the D-dimensional geometrically-local, \depthname synthesis $S$, and $\ket{0_{N}}$ is the $0$ state on the entire $N$ register of that synthesis.}

	Given the geometrically-local, \depthname synthesis $S = (\Gamma, L, M, N)$, let us ignore the registers $L$ and $M$ as they have already been measured or traced-out. 
	
	Let $\ell$ be the width of the $N$ register of the synthesis $S$.  Define the stopping width $w_0 \equiv 20d(\Delta+h(n) + 2)$.

	\If{ $\ell < w_0 = 20d(\Delta+h(n)+2)$  OR $\eta <1$ \label{ln:alg2stoppingwidth} }{Compute the quantity $\bra{0_{N}}\phi_S\ket{0_{N}}$ to within error $\epsilon$.\\
		\Return $\mathcal{A}_{full}(S, \mathcal{B}, \epsilon, D-1)$
	}
	
	\Else{ 	
		We will ``slice" the $D$-Dimensional geometrically-local, \depthname synthesis $S$ in $\Delta$ different locations, as follows:
		
		Since $N$ is $D$-Dimensional we define a region $Z \subset N$ to be the sub-hyper-cube of $N$ which has width $10d(\Delta+h(n)+2)$, and is centered at the halfway point of $N$ width-wise (about the point $\ell /2$ of the way across $N$). Since the maximum amount of width collectively between $\Delta$ slices in $K_{heavy}$ is $10d\Delta + 10d \cdot h(n)$ (see Algorithm \ref{alg:constant-width-assumption-driver2}), we are guaranteed that the region $Z$ will contain at least $\Delta$ slices, $K_1,K_2,\dots,K_\Delta$, from $K_{heavy}$.  For any two slices $K_i, K_j\in K_{heavy}$, let the un-normalized states $\ket{\varphi_{\L, i}}, \ket{\varphi_{i, j}}, \ket{\varphi_{j, \R}}$, and corresponding sub-syntheses $S_{\L, i}, S_{i,j}, S_{j, \R}$ be as defined in Definition 23 from \cite{CC20}, with $K = \log^3(n)$.   We will use these to describe the result of our division step below.\label{ln:alg2defz}
		
		For each $K_i \in K_{heavy}$ pre-compute the quantity $\kappa^{i}_{T, \epsilon}$, with $T = \log^3(n)$, and  $\epsilon = \delta 2^{-10 \log(n) \log(\log(n)))}$. \label{ln:compnorm}  \label{ln:pre-compute}
		
		\Return \label{ln:returnofA}\begin{align}
		&\sum_{i=1}^{\Delta}\frac{1}{(\kappa^{i}_{T, \epsilon})^{4K+1}} \mathcal{A}(S_{L,i},\eta-1) \cdot \mathcal{A}(S_{i,R},\eta-1) \label{eq:onecutterms}\\
		&-\sum_{i=1}^{\Delta}\sum_{j=i+1}^{\Delta}\frac{1}{(\kappa^{i}_{T, \epsilon}\kappa^{j}_{T, \epsilon})^{4K+1}}\mathcal{A}(S_{L,i},\eta-1) \cdot \mathcal{A}_{full}(S_{i,j}, \mathcal{B},\epsilon, D-1)\cdot \mathcal{A}(S_{j,R},\eta-1) \label{eq:twocutterms} \\
		&+\sum_{i=1}^{\Delta}\sum_{j=i+2}^{\Delta}\frac{1}{(\kappa^{i}_{T, \epsilon}\kappa^{j}_{T, \epsilon})^{4K+1}}\mathcal{A}(S_{L,i},\eta-1) \cdot \mathcal{A}(S_{j,R},\eta-1) \nonumber \\
		&\cdot\Bigg[\sum_{\sigma\in\mathcal{P}(\{i+1, \cdots, j-1\})\setminus\emptyset}(-1)^{\abs{\sigma}+1} \mathcal{A}_{full}\left ( \left (\otimes_{k \in \sigma} \Pi^K_{F_k} \bra{0_{M_k}} \right )\phi_{i,j}\left (\otimes_{k \in \sigma} \ket{0_{M_k}}\Pi^K_{F_k}\right ),\mathcal{B}, \frac{\epsilon}{2^\Delta}, D-1 \right) \Bigg]  \label{eq:basecaseinsum}
		\end{align}
		\tcc{Note that for brevity it is implied that $\mathcal{A}(S, \eta)=\mathcal{A}(S, \eta, \Delta, \epsilon, h(n),  K_{heavy}, D,  \mathcal{B})$.}

	}
	\caption{$\mathcal{A}(S, \eta, \Delta, \epsilon, h(n),  K_{heavy}, D, \mathcal{B})$: Recursive Divide-and-Conquer Subroutine for Algorithm \ref{alg:constant-width-assumption-driver2}.}

\end{algorithm}

\newpage

\appendix
\appendixpage

\section{Proofs of Lemma Statements}\label{appendix:lemma-proofs}

\begin{lemma*}[Restatement of Lemma \ref{thm:mainruntimelemma}]
Let $w$ be defined as in Theorem \ref{thm:inductivestep}. Then $\mathcal{A}_{full}(S=(C,L,M,N), \mathcal{B}, \delta, D)$ runs in time
    $\delta^{-2} \cdot2^{O((d\polylog(n))^{D \cdot 3^{D}}w^{1/3}) (1/\delta)^{1 / \log^2(n)}}$.
\end{lemma*}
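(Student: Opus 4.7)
The plan is to prove Lemma \ref{thm:mainruntimelemma} by induction on the dimension $D$, mirroring the inductive structure of Theorem \ref{thm:inductivestep}. The base case $D=3$ follows from the runtime bound of Theorem 28 of \cite{CC20}. For the inductive step $D \ge 4$, assuming the lemma already holds at dimension $D-1$, I would separately bound the three contributions inside $\mathcal{A}_{full}$: (a) the edge-case branches $\delta \leq 1/n^{\log^2 n}$ (brute force $2^{O(n)}$, dominated by the claimed bound because $(1/\delta)^{1/\log^2 n} \ge n$ in that regime) and $\delta \ge 1/2$ (constant time); (b) the slice-selection call $\mathcal{A}_{full}(S,\mathcal{B},\epsilon_0,D-1)$ with $\epsilon_0 = 2^{\log(\delta)/(2h(n))-1} - 2^{\log(\delta)/h(n)-1}$; and (c) the final recursive invocation of Algorithm \ref{alg:quasi-poly-subroutine}. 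Parts (a) and (b) are straightforward: a short Taylor expansion gives $\epsilon_0 = \Theta(-\log(\delta)/h(n)) \ge 1/\log^7(n)$ throughout the non-trivial range $1/n^{\log^2 n} < \delta < 1/2$, so $(1/\epsilon_0)^{1/\log^2 n} = O(1)$ and the $(D-1)$-dimensional inductive hypothesis bounds this call well inside the claimed budget.

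The heart of the proof is analyzing (c). Let $T_D(\ell,\eta)$ denote the runtime of $\mathcal{A}(\cdot,\eta,\Delta,\epsilon,h(n),K_{heavy},D,\mathcal{B})$ on a synthesis whose $N$-register has width $\ell$ in the direction being cut. Reading off Algorithm \ref{alg:quasi-poly-subroutine}, one obtains the recurrence
\[
T_D(\ell,\eta) \;\le\; O(\Delta^2)\cdot T_D(\ell',\eta-1) \;+\; O(\Delta^2\cdot 2^\Delta)\cdot T_{\mathrm{full},D-1}\bigl(w_0\cdot w,\;\epsilon/2^\Delta\bigr) \;+\; \mathrm{poly}(n),
\]
where $\ell' \le (3/4)\ell + O(d\cdot h(n))$ is the width of each sub-synthesis $S_{L,i}, S_{j,R}$ produced by a cut centered in the region $Z$ of Line \ref{ln:alg2defz}, the $\mathrm{poly}(n)$ term accounts for the pre-computation of the norms $\kappa^{i}_{T,\epsilon}$ on Line \ref{ln:pre-compute} (which itself reduces to polynomially many $\mathcal{A}_{full}(\cdot,D-1)$ calls absorbed by the same inductive hypothesis), and the base case $\ell < w_0$ or $\eta < 1$ simply returns $\mathcal{A}_{full}(S,\mathcal{B},\epsilon,D-1)$. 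The key structural observation is that when we drop from dimension $D$ to dimension $D-1$, the former $D$-th dimension is absorbed into the qudit structure with width at most $w_0 = 20d(\Delta+h(n)+2) = O(d\log^7 n)$, so the ``extra widths'' product in the $(D-1)$-dimensional call becomes $w_0 \cdot w$, introducing only a single factor of $w_0^{1/3} = O((d\log^7 n)^{1/3})$ into the exponent.

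To resolve the recurrence I would unroll it to depth $\eta = O(\log(n)/D)$---the maximum number of factor-$4/3$ contractions of $\ell$ before $\ell$ falls below $w_0$. The resulting $\mathcal{A}$-recursion tree has at most $(O(\Delta^2))^\eta = n^{O(\log\log(n)/D)} = 2^{O(\polylog n)}$ nodes, each contributing $O(\Delta^2 \cdot 2^\Delta) = O(n\,\polylog n)$ calls to $\mathcal{A}_{full}$ at dimension $D-1$ with width $w_0 \cdot w$ and error at worst $\epsilon/2^\Delta = \delta/n^{O(\log\log n)}$. Invoking the inductive hypothesis on each such call and using $(\delta/n^{O(\log\log n)})^{-2} = n^{O(\log\log n)}\cdot \delta^{-2}$ and $(\delta/n^{O(\log\log n)})^{-1/\log^2 n} = O(1)\cdot(1/\delta)^{1/\log^2 n}$, the total runtime takes the form
\[
\delta^{-2}\cdot 2^{O\bigl((d\polylog n)^{(D-1) 3^{D-1}+O(1)}\, w^{1/3}\bigr)\,(1/\delta)^{1/\log^2 n}}.
\]
Since $D\cdot 3^D - (D-1)\cdot 3^{D-1} = 3^{D-1}(2D+1) \ge 7\cdot 3^{D-1}$ strictly dominates any $O(1)$ additive polylog correction for $D \ge 3$, this bound fits inside the claimed $(d\polylog n)^{D\cdot 3^D}$ form.

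The main obstacle is not any single algebraic estimate but the careful accounting of the error and width parameters as they propagate through the $\mathcal{A}$-recursion and the dimension drop. Specifically, one must verify that: (i) the deflated error $\epsilon/2^\Delta$ fed into the innermost subset-sum term in Equation (\ref{eq:basecaseinsum}) is still large enough that $(2^\Delta/\epsilon)^{1/\log^2 n}$ remains $O(1)\cdot(1/\delta)^{1/\log^2 n}$, despite the exponential-in-$\Delta$ deflation; (ii) the pre-computation of each $\kappa^{i}_{T,\epsilon}$ on Line \ref{ln:pre-compute} can itself be carried out within the same inductive budget; and (iii) the extra-widths product $w$ in Theorem \ref{thm:inductivestep} accumulates only the multiplicative factor $w_0$ per dimension drop, and no worse. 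Each verification parallels the corresponding step in the runtime analysis of Theorem 28 of \cite{CC20}, with only minor modifications to accommodate a general dimension $D$.
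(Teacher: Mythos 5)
Your overall strategy matches the paper's: induct on $D$, set up a recurrence for the driver's runtime $T_1$ in terms of the subroutine's runtime $T_2$ and a $(D-1)$-dimensional call, unroll the $T_2$ recursion to depth $\eta = O(\log n / D)$, and bound each leaf by an application of the bound one dimension lower. (The paper unrolls the dimension recursion explicitly to $D=2$ rather than invoking an inductive hypothesis at $D-1$; this is a presentational rather than a substantive difference.)

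There is, however, a genuine gap: you never track the \emph{growth of the circuit depth} as the recursion descends in dimension. The sub-syntheses $S_{L,i}, S_{i,j}, S_{j,R}$ are built from block-encodings such as $\Gamma_{\rho_F^K}$, whose circuits have depth roughly $(2K+1)\cdot 3d$ with $K=\log^3 n$, after padding with swap gates to restore geometric locality. The paper records this by writing the $(D-1)$-dimensional calls with depth parameter $d^3\polylog(n)$, then defines $f(d)=d^3\polylog(n)$ and $f^{(k)}(d) < d^{3^k}(\polylog n)^{3^k}$; it is precisely this triple-exponential-in-$k$ growth of depth, together with the induced growth of thickness $g^{(k)}(d,w)$, that forces the exponent $D\cdot 3^D$ in the final bound. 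Your recurrence $T_D(\ell,\eta)\le O(\Delta^2)T_D(\ell',\eta-1)+O(\Delta^2 2^\Delta)T_{\mathrm{full},D-1}(w_0w,\epsilon/2^\Delta)+\poly(n)$ carries the width $w_0 w$ and the error $\epsilon/2^\Delta$ into the $(D-1)$-dimensional call, but silently keeps the depth at $d$. Consequently the intermediate bound you obtain, $2^{O((d\polylog n)^{(D-1)3^{D-1}+O(1)}w^{1/3})}$, is not what the inductive hypothesis actually yields. Applying the $(D-1)$-dimensional bound at depth $d' = d^3\polylog n$ gives $(d'\polylog n)^{(D-1)3^{D-1}} \le (d\polylog n)^{3(D-1)3^{D-1}} = (d\polylog n)^{(D-1)3^{D}}$, i.e.\ a multiplicative factor of $3$ in the exponent per dimension drop, not an additive $O(1)$. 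Your closing observation --- that $D\cdot 3^D - (D-1)\cdot 3^{D-1} = 3^{D-1}(2D+1)$ dominates an $O(1)$ additive polylog correction --- therefore justifies the wrong inequality; what actually needs to hold is $D\cdot 3^D \ge (D-1)\cdot 3^D + O(1)$, i.e.\ $3^D \ge O(1)$, and this is where the slack in the paper's exponent is spent. The conclusion you reach is still true, but the reasoning given for why the exponent has the shape $D\cdot 3^D$ (as opposed to something like $O(D)$) is missing; you would need to restore the depth parameter to your recurrence, define something like the paper's $f^{(k)}$ and $g^{(k)}$, and show that these compositions stay within $(d\polylog n)^{D3^D}$.

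A smaller imprecision in the same spirit: your claim (iii), that the extra-widths product accumulates only a fixed factor $w_0 = O(d\log^7 n)$ per dimension drop, is not quite accurate. The width of the absorbed dimension depends on the \emph{current} (already-grown) depth, so the paper writes $g^{(k)}(d,w) = g(d,w)\prod_{i=1}^{k-1}(f^{(k-i)}(d))^D$, which is much larger than $w_0^{k}$. Again this is absorbed by the $D\cdot 3^D$ exponent, but it needs to be said.
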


\begin{proof}
The runtime analysis of $\mathcal{A}_{full}$ begins the same as in \cite{CC20}. Note that if the IF statement on Line \ref{ln:checkforsmalldelta} is satisfied, then the specified additive error $\delta$ is so small that we can compute the desired quantity, $|\bra{0_{ALL}}C\ket{0_{ALL}}|^2$, exactly, by brute force, in $2^{O(n)}$ time, and this will still take less time than the guaranteed runtime:  
	
	\[T(n) = \delta^{-2} \cdot2^{O((d\polylog(n))^{D \cdot 3^{D}}w^{1/3}) (1/\delta)^{1 / \log^2(n)}}.\]  

Let $T_1(l, D, d, w, \delta)$ represent the run-time of algorithm 1 for a problem with side length $l$ in dimension D with circuit depth $d$ and thickness $w$ in dimensions $> D$ to error $\delta$. Let $T_2(l, D, d, w, \epsilon)$ represent the same for algorithm 2. Then we may bound $T_1$ as follows:

\[T_1(l, D, d, w, \delta) < \frac{n^{1/D}}{10d} T_1(l, D - 1, d, O(w d), E_1(\delta)) + T_2(l, D, d, w, E_2(\delta)),\]
\[T_1(l, 2, d, w, \delta) < \mathcal{B}(n, d, w, \delta)\]
where $E_1(\delta) = 2^{\frac{\log(\delta)}{2h(n)} - 1} - 2^{\frac{\log(\delta)}{h(n)} - 1}$ and $E_2(\delta) = \delta 2^{-10 \log(n) \log(\log(n))}$.
\\
The term $\frac{n^{1/D}}{10d} T_1(l, D - 1, d, O(w d), E_1(\delta))$ follows from lines 6-10 of algorithm 1. This entails making $\frac{n^{1/D}}{10d}$ calls of algorithm 1 on a depth d synthesis in $D - 1$ dimensions to error $E_1(\delta)$ with thickness $O(d)$ in dimension $D$. See the analysis of Theorem 28 of \cite{CC20} for details on how this sub-problem is constructed. The term $T_2(l, D, d, w, E_2(\delta))$ refers to the call of algorithm 2 made in line 14 of algorithm 1. The base case follows directly from line 5 of algorithm 1. By standard recursion analysis, we get that
\\
\[ T_1(l, D, d, w, \delta) < n^{D} \mathcal{B}(n, d, O(wd^D), E_1^{(D - 2)}(\delta)) + \sum_{i = 0}^{D - 3} n^{\frac{i}{D-i+1}} T_2(l, D - i, d, O(wd^i), E_2(E_1^{(i)}(\delta))) \] where $E_1^{(i)}$ refers to the function $E_1$ composed with itself $i$ times. 

Similarly, we can bound $T_2$ as follows:

\[T_2(l, D, d, w, \epsilon) < 2\Delta T_2(\frac{3}{4}l, D, d, w, \epsilon) + \Delta^2 T_1(l, D-1, d^3 \polylog(n), O(wd), \epsilon)\]
\[ +  \Delta^2 2^{\Delta} T_1(l, D-1, d^3 \polylog(n), O(wd), E_3(\epsilon)) + 2 \Delta T_1(l, D - 1, d^2 \polylog(n), O(wd), \epsilon) + \poly(n)\]

\[ T_2(O(1), D, d, w, \epsilon) = T_1(n^{1/D}, D - 1, d, O(w), \epsilon)\]
where $E_3(\epsilon) = \frac{\epsilon}{2^{\Delta}}$.

The term $2\Delta T_2(\frac{3}{4}l, D, d, w, \epsilon)$ follows from the calls to $\mathcal{A}(S_{L,i},\eta-1)$ and $\mathcal{A}(S_{i,R},\eta-1)$ for each $i$. The term $\Delta^2 T_1(l, D-1, d^3 \polylog(n), O(wd), \epsilon)$ refers to the calls to $\mathcal{A}_{full}(S_{i,j}, \mathcal{B},\epsilon, D-1)$ for each $i$ and $j > i$. The term $\Delta^2 2^{\Delta} T_1(l, D-1, d^3 \polylog(n), O(wd), E_3(\epsilon))$ refers to the calls to \\ $\mathcal{A}_{full}\left ( \left (\otimes_{k \in \sigma} \Pi^K_{F_k} \bra{0_{M_k}} \right )\phi_{i,j}\left (\otimes_{k \in \sigma} \ket{0_{M_k}}\Pi^K_{F_k}\right ),\mathcal{B}, \frac{\epsilon}{2^\Delta}, D-1 \right)$ for each $i$, $j > i$, and $\sigma$. The term $2 \Delta T_1(l, D - 1, d^2 \polylog(n), O(wd), \epsilon)$ refers to the calculation of $\kappa_{T, \epsilon}$ for each $i$. For details regarding the construction of the sub-problems for the last three terms, refer to the run-time analysis of algorithm 2 of \cite{CC20}. The final $\poly(n)$ term follows from the calculation of the region $Z$ detailed in line 8 of algorithm 2. The base case follows from the fact that if we have a problem in $D$ dimensions with an $O(1)$ sized edge, we may apply an algorithm in $D - 1$ dimensions to solve it at the cost of an extra $O(1)$ sized thickness. By standard recursion analysis, we get that 

\[T_2(l, D, d, w, \epsilon) < (2\Delta)^\eta \cdot T_2((\frac{3}{4})^\eta l, D, d, w, \epsilon) + \sum^{\eta-1}_{i=0}(2\Delta)^i(\Delta^2 T_1((\frac{3}{4})^il, D-1, d^3 \polylog(n), O(wd), \epsilon) \]
\[+ \Delta^2 2^{\Delta} T_1((\frac{3}{4})^il, D-1, d^3 \polylog(n), O(wd), E_3(\epsilon)) + 2\Delta T_1((\frac{3}{4})^il, D-1, d^2 \polylog(n), O(wd), \epsilon) + \poly(n))\]\\

Now, as we begin to substitute the recurrence relation for $T_2$ (in terms of $T_1)$ into the recurrence relation for $T_1$ (in terms of $T_2$), we need to define $\eta_i$, the number of recursive calls made by $T_2$ to $T_1$ in the $i$-th dimension. Let us define $\eta_i$ as the following:
\begin{equation}\label{eq:eta_dimension}
\begin{split}
    \eta_i = \log_{3/4}(n^{-1/i}) = \frac{\log(n)}{i \cdot \log(4/3)}
\end{split}
\end{equation}
Now that we have defined $\eta_i$, let us substitute the $T_2$ recurrence relation into $T_1$:
\begin{align*}
    T_1(l, D, d, w, \delta) &< n^{D} \mathcal{B}(n, d, O(wd^D), E_1^{(D - 2)}(\delta))\\
    &+ \sum_{i = 0}^{D - 3} n^{\frac{i}{D-i+1}}  \Bigg[(2\Delta)^{\eta_{D-i}} T_1 \Bigg( l, D-i-1, d, O(wd^i), E_2(E_1^{(i)}(\delta)) \Bigg) \\
    &+ \sum^{\eta_{D-i}-1}_{j=0} ( 2\Delta)^j\Bigg(\Delta^2 T_1 \Bigg ( \left(\frac{3}{4}\right)^j l, D-i-1, d^3 \polylog(n), O(wd^{i+1}), E_2(E_1^{(i)}(\delta)) \Bigg) \\
    &+ \Delta^22^\Delta T_1\Bigg(\left(\frac{3}{4}\right)^j l, D-i-1, d^3 \polylog(n), O(wd^{i+1}), E_3(E_2(E_1^{(i)}(\delta)))\Bigg) \\
    &+ 2\Delta T_1\Bigg(\left(\frac{3}{4}\right)^j l, D-i-1, d^2 \polylog(n), O(wd^{i+1}), E_2(E_1^{(i)}(\delta)))\Bigg) + \poly(n)\Bigg)\Bigg] 
\end{align*}
where the first $T_1$ term on the right-hand side comes from unrolling the $T_2$ term in $T_2$'s recurrence relation down to its base-case. 
\\
\\
We can then continue to simplify the upper bound by combining the three terms in the second summation term into $3\Delta^22^\Delta T_1\left(\left(\frac{3}{4}\right)^j l, D-i-1, d^3 \polylog(n), O(wd^{i+1}), E_3(E_2(E_1^{(i)}(\delta)))\right)$ since\\ $3\Delta^22^\Delta \geq (2\Delta + \Delta^22^\Delta + \Delta^2)$ and $E_3(E_2(E_1^{(i)}(\delta))) \leq E_2(E_1^{(i)}(\delta))$.

\begin{align*}
    T_1(l, D, d, w, \delta) &< n^{D} \mathcal{B}(n, d, O(wd^D), E_1^{(D - 2)}(\delta))\\
    &+ \sum_{i = 0}^{D - 3} n^{\frac{i}{D-i+1}}  \Bigg[(2\Delta)^{\eta_{D-i}} T_1 \left( l, D-i-1, d, O(wd^i), E_2(E_1^{(i)}(\delta)) \right) \\
    &+ \sum^{\eta_{D-i}-1}_{j=0} ( 2\Delta)^j\left( 3\Delta^22^\Delta T_1\left(\left(\frac{3}{4}\right)^j l, D-i-1, d^3 \polylog(n), O(wd^{i+1}), E_3(E_2(E_1^{(i)}(\delta)))\right)\right)\Bigg] 
\end{align*}

Next, we can unpack the bracket in the first summation term to get the following:
\begin{align*}
    T_1(l, D, d, w, \delta) &< n^{D} \mathcal{B}(n, d, O(wd^D), E_1^{(D - 2)}(\delta))\\
    &+ \sum_{i = 0}^{D - 3} n^{\frac{i}{D-i+1}} (2\Delta)^{\eta_{D-i}} T_1 \left( l, D-i-1, d, O(wd^i), E_2(E_1^{(i)}(\delta)) \right) \\
    &+ \sum_{i = 0}^{D - 3} \sum^{\eta_{D-i}-1}_{j=0} ( 2\Delta)^{j} n^{\frac{i}{D-i+1}} \Bigg( 3\Delta^2 2^\Delta T_1\Bigg(\Bigg(\frac{3}{4}\Bigg)^j l, D-i-1, d^3 \polylog(n), \\ 
    & O(wd^{i+1}), E_3(E_2(E_1^{(i)}(\delta)))\Bigg)\Bigg)
\end{align*}

The following expression can be obtained by extracting the $T_1$ terms from the summation terms. We do that by bounding all the $T_1$ terms in the first summation term by \\$T_1\left(l, D-1, d, O(wd^D), E_2(E_1^{(D)}(\delta)\right)$ since the runtime will be longer when we start on higher dimension $D$ instead of dimension $D-i-1$, larger thickness $O(wd^D)$ instead of thickness $O(wd^i)$, and smaller error $E_2(E_1^{(D)}(\delta))$ instead of $E_2(E_1^{(i)}(\delta))$. A similar argument could be made for the $T_1$ terms in the second summation term. 

\begin{align*}
    T_1(l, D, d, w, \delta) &< n^{D} \mathcal{B}(n, d, O(wd^D), E_1^{(D - 2)}(\delta))\\
    &+ T_1 \left( l, D-1, d, O(wd^D), E_2(E_1^{(D)}(\delta)) \right) \sum_{i = 0}^{D - 3} n^{\frac{i}{D-i+1}} (2\Delta)^{\eta_{D-i}} \\
    &+ 3\Delta^2 2^\Delta T_1\left(l, D-1, d^3 \polylog(n), O(wd^{D}), E_3(E_2(E_1^{(D)}(\delta)))\right) \sum_{i = 0}^{D - 3} \sum^{\eta_{D-i}-1}_{j=0} ( 2\Delta)^{j} n^{\frac{i}{D-i+1}}
\end{align*}

In the following step, for the first summation term, we bound the $n^{\frac{i}{D-i+1}}$ term by $\poly(n)$ and the $(2\Delta)^{\eta_{D-i}}$ term by $2^{\polylog(n)}$ since $\Delta, \eta = O(\log(n))$. Since we have $O(D)$ terms in the first summation term, we get $O(D \poly(n) 2^{\polylog(n)})$. Likewise, we can do the same thing for the second summation term to get the same upper bound.

\begin{align*}
    T_1(l, D, d, w, \delta) &< n^{D} \mathcal{B}(n, d, O(wd^D), E_1^{(D - 2)}(\delta))\\
    &+ T_1 \left( l, D-1, d, O(wd^D), E_2(E_1^{(D)}(\delta)) \right) O(D \poly(n) 2^{\polylog(n)}) \\
    &+ 3\Delta^2 2^\Delta T_1\left(l, D-1, d^3 \polylog(n), O(wd^{D}), E_3(E_2(E_1^{(D)}(\delta)))\right) O(D \poly(n) 2^{\polylog(n)})
\end{align*}

The following expression can be obtained by combining the second and third term in the previous expression. We get $T_1\left(l, D-1, d^3\polylog(n), O(wd^D), E_3(E_2(E_1^{(D)}(\delta)))\right)$ since\\ $d^3\polylog(n) > d$ and $E_3(E_2(E_1^{(i)}(\delta))) \leq  E_2(E_1^{(i)}(\delta))$ which would give us a larger runtime bound. The $3\Delta^2 2^\Delta $ can be absorbed into the $O(D\poly(n)2^{\polylog(n)})$ term.

\begin{align*}
    T_1(l, D, d, w, \delta) &< n^{D} \mathcal{B}(n, d, O(wd^D), E_1^{(D - 2)}(\delta))\\
    &+ T_1\left(l, D-1, d^3 \polylog(n), O(wd^{D}), E_3(E_2(E_1^{(D)}(\delta)))\right) O(D \poly(n) 2^{\polylog(n)})
\end{align*}

Now, we substitute the BGM algorithm's runtime from Theorem 5 of \cite{BGM19} into the first term to get the recurrence for $T_1$ in dimension $D$ in terms of $T_1$ in one dimension lower.

\begin{align*}
    T_1(l, D, d, w, \delta) &< \poly(n^D) (E_1^{(D - 2)}(\delta))^{-2} 2^{d^3 w^{1/D}}\\
    &+ T_1\left(l, D-1, d^3 \polylog(n), O(wd^{D}), E_3(E_2(E_1^{(D)}(\delta)))\right) O(D \poly(n) 2^{\polylog(n)})
\end{align*}

Before we begin to unroll the recurrence relation for $T_1$ with respect to its dimension, let us first define $f(d)$, the depth of the block-encoding (at this point in the analysis), and $f^{(k)}(d)$, the depth of the block-encoding after unrolling the recurrence relation for $k$ dimensions

\begin{align*}
   f(d) &= d^3 \polylog(n) \\
   f^{(k)}(d) &< d^{3^k} (\polylog(n))^{3^k} 
\end{align*}

We can also define $g(d,w)$, the thickness of the circuit (at this point in the analysis), and $g^{(k)}(d)$, the thickness of the circuit after unrolling the recurrence relation for $k$ dimensions as follows:

\begin{align*}
  g(d, w) &= O(wd^D) \\
  g^{(k)}(d, w) &= g^{(k - 1)}(d, w) (f^{(k-1)}(d))^D \\
 &= g^{(k - 2)}(d, w) (f^{(k-2)}(d))^D (f^{(k-1)}(d))^D\\
&= g^{(k - \ell)}(d, w) \prod_{i = 1}^{\ell} (f^{(k - i)}(d))^D \\
&= g(d, w) \prod_{i = 1}^{k - 1} (f^{(k - i)}(d))^D\\
&< O(wd^D) (\prod_{i = 1}^{k - 1} (d \polylog(n))^{3^i})^D\\
&< O(wd^D) (d \polylog(n))^{D3^k}
\end{align*}

Now, we want to write the unrolling of the recurrence relation for $T_1$ with respect to dimensions in terms of $f(d)$ and $g(d,w)$. To simplify the writing, we define $E_5(\delta) = E_3(E_2(E_1^{(D)}(\delta)))$. The following expression is obtained by unrolling $T_1$'s recurrence relation for an arbitrary dimension $D$ to dimension 2 which is the base-case for $T_1$.

\begin{align*}
    T_1(\ell, D, d, w, \delta) &<
    O((D \poly(n^D) 2^{\polylog(n)})^{D - 2}) T_1(\ell, 2, f^{(D - 2)}(d), g^{(D-2)}(d,w), E_5^{(D-2)}(\delta)) \\
    &+ \sum^{D-3}_{i=0} O((D \poly(n^D) 2^{\polylog(n)})^{i}) \poly(n) \left(E_1^{(D-i-2)}\left(E_5^{(i)}\left(\delta\right)\right)\right)^{-2} 2^{(f^{(i)}(d))^3(g^{(i)}(d, w))^{\frac{1}{D - i}}}
\end{align*}

To further simplify, we replace each occurrence of $i$ in order to maximize each quantity, then replace each occurrence of $D - 2$ with $D$. Note that the more we compose $E_1$ and $E_5$, the smaller they get and hence their inverse-squared form will be larger. For $f(d)$ and $g(d,w)$, the more we composed them, the greater the depth and the thicker the thickness of the block-encoding which gives us an upper bound for the runtime. Note how we chose the upper bound for the exponent of the $g^{(D)}(d,w)$ to be $\frac{1}{3}$ to get the smallest root form to maximize the exponent of the $2$ term.  

\begin{align*}
    T_1(\ell, D, d, w, \delta) &<
    O((D \poly(n) 2^{\polylog(n^D)})^{D}) T_1(\ell, 2, f^{(D)}(d), g^{(D)}(d,w), E_5^{(D)}(\delta)) \\
    &+ D\cdot O((D \poly(n^D) 2^{\polylog(n)})^{D}) \poly(n) \left(E_1^{(D)}\left(E_5^{(D)}\left(\delta\right)\right)\right)^{-2} 2^{(f^{(D)}(d))^3(g^{(D)}(d, w))^{\frac{1}{3}}}
\end{align*}

Next, we write the first term of the right-hand side of the first inequality according to BGM's runtime as given in Theorem 5 of \cite{BGM19} and then brought the $D\cdot \poly(n)$ coefficient in the second term into the second term's big-$O$. The second inequality comes from the fact that the first term is smaller than the second term and hence can be absorbed into the second term.

\begin{align*}
    T_1(\ell, D, d, w, \delta) &<
    O((\poly(n^D))^{D + 1}(D 2^{\polylog(n)})^{D}) (E_5^{(D)}(\delta))^{-2} 2^{(f^{(D)}(d))^2 (g^{(D)}(d,w))^{1/D} } \\
    &+ O((D \poly(n^D))^{D + 1}(2^{\polylog(n)})^{D}) \left(E_1^{(D)}\left(E_5^{(D)}\left(\delta\right)\right)\right)^{-2} 2^{(f^{(D)}(d))^3(g^{(D)}(d, w))^{\frac{1}{3}}}\\
     &<O((D \poly(n^D))^{D + 1}(2^{\polylog(n)})^{D}) \left(E_1^{(D)}\left(E_5^{(D)}\left(\delta\right)\right)\right)^{-2} 2^{(f^{(D)}(d))^3(g^{(D)}(d, w))^{\frac{1}{3}}}
\end{align*}

Now, we substitute the upper bounds for $f^{(k)}(d)$ and $g^{(k)}(d, w)$ as previously defined into the above expression to get the following inequality:

\begin{align*}
    T_1(\ell, D, d, w, \delta) &<
     O((D \poly(n^D))^{D + 1}(2^{\polylog(n)})^{D}) \left(E_1^{(D)}\left(E_5^{(D)}\left(\delta\right)\right)\right)^{-2} \\
     &\cdot2^{(d\polylog(n))^{3^{D + 1}}(O(wd^D)^\frac{1}{3} (d \polylog(n))^{D3^{D-1}})}\\
     &=
     O((D \poly(n^D))^{D + 1}(2^{\polylog(n)})^{D}) \left(E_1^{(D)}\left(E_5^{(D)}\left(\delta\right)\right)\right)^{-2} \\
     &\cdot2^{O(wd^D)^{\frac{1}{3}} (d \polylog(n))^{(9+D)3^{D-1}}}\\
     &<     O((D \poly(n^D))^{D + 1}(2^{\polylog(n)})^{D}) \left(E_1^{(D)}\left(E_5^{(D)}\left(\delta\right)\right)\right)^{-2} \\
     &\cdot2^{O(d^{\left(3^{D+1} + D / 3 + D3^{D - 1}\right)} (\polylog(n))^{\left(3^{D + 1} + D3^{D - 1}\right)} w^{1/3})} \\
     &< \left(E_1^{(D)}\left(E_5^{(D)}\left(\delta\right)\right)\right)^{-2} \cdot2^{O(d^{\left(3^{D+1} + D / 3 + D3^{D - 1}\right)} (\polylog(n))^{\left(3^{D + 1} + D3^{D - 1}\right)} w^{1/3})} \\
     &< \left(E_1^{(D)}\left(E_5^{(D)}\left(\delta\right)\right)\right)^{-2} \cdot2^{O((d\polylog(n))^{D3^{D}} w^{1/3})}
\end{align*}

Now note that \\

$E_1(\delta) = 2^{\frac{\log(\delta)}{2h(n)} - 1} - 2^{\frac{\log(\delta)}{h(n)} - 1} = \frac{1}{2} (2^{\frac{\log(\delta)}{2h(n)} } - 2^{\frac{\log(\delta)}{h(n)}}) \geq \frac{\ln{2}}{2} 2^{\frac{\log(\delta)}{h(n)}} (\frac{\log(\delta)}{2h(n)}  -\frac{\log(\delta)}{h(n)} ) = -\frac{\log(\delta)}{4h(n)} 2^{\frac{\log(\delta)}{h(n)}} \cdot \ln{2} $\\

Hence, by monotonicity, 
\begin{align*}
    E_1(E_1(\delta))  = -\frac{\log(E_1(\delta))}{4h(n)} 2^{\frac{\log(E_1(\delta))}{h(n)}} \cdot \ln{2}    
    &\geq -\frac{\log(-\frac{\log(\delta)}{4h(n)}2^{\frac{\log(\delta)}{h(n)}} \cdot \ln{2})}{4h(n)} 2^{\frac{\log(-\frac{\log(\delta)}{4h(n)} 2^{\frac{\log(\delta)}{h(n)}} \cdot \ln{2})}{h(n)}} \cdot \ln{2}\\
    &= -\frac{\log(-\frac{\log(\delta)}{4h(n)}) + \log(2^{\frac{\log(\delta)}{h(n)}} ) + \log( \ln{2}))}{4h(n)} 2^{\frac{\log(-\frac{\log(\delta)}{4h(n)} 2^{\frac{\log(\delta)}{h(n)}} \cdot \ln{2})}{h(n)}} \cdot \ln{2}  \\
    &\geq -\frac{ \log( \ln{2})}{4h(n)} 2^{\frac{\log(-\frac{\log(\delta)}{4h(n)} 2^{\frac{\log(\delta)}{h(n)}} \cdot \ln{2})}{h(n)}} \cdot \ln{2}  \\
    &= -\frac{ \log( \ln{2})}{4h(n)} 2^{\frac{\log(E_1(\delta))}{h(n)}} \cdot \ln{2}  \\
    &\geq -\frac{ \log( \ln{2})}{4h(n)} (E_1(\delta))^{\frac{1}{h(n)}} \cdot \ln{2}\\
    &\geq -\frac{ \log( \ln{2})}{4h(n)} E_1(\delta) \cdot \ln{2}
\end{align*}

And so for some constant $a$ we get, 

\begin{align*}
    E_1(a \delta)  \geq -\frac{ \log( \ln{2})}{4h(n)} a \delta \cdot \ln{2}
\end{align*}

Therefore 
\begin{align*}
    &E_1^{(D)}(\delta) \geq (\frac{-\log(\ln(2))\ln(2)}{4 h(n)})^{D-1} E_1(\delta) \geq (\frac{-\log(\ln(2))\ln(2)}{4 h(n)})^{D} \delta \\
    &\implies E_5(\delta) = E_3(E_2(E_1^{(D)}(\delta))) \geq 2^{-10\log(n)\log(\log(n)) - \Delta} (\frac{-\log(\ln(2))\ln(2)}{4 h(n)})^{D} \delta \\
    &\implies E_5^{(D)}(\delta) \geq 2^{D(-10\log(n)\log(\log(n)) - \Delta)} \left(\frac{-\log(\ln(2))\ln(2)}{4 h(n)} \right)^{D^2} \delta \\
    &\implies (E_1^{(D)} \circ E_5^{(D)}) (\delta) 
    \geq 2^{D(-10\log(n)\log(\log(n)) - \Delta)} \left(\frac{-\log(\ln(2))\ln(2)}{4 h(n)} \right)^{D^2 + D} \delta
\end{align*}

Thus with $\Delta = \log(n)$ we get that 

\begin{align*}
\left( (E_1^{(D)} \circ E_5^{(D)}) (\delta)\right)^{-2} &\leq 2^{D(10\log(n)\log(\log(n)) + \Delta)} \left(\frac{-\log(\ln(2))\ln(2)}{4 h(n)} \right)^{-2D^2 - 2D} \delta^{-2} \\
&= 2^{D \polylog(n)} \left(\frac{-\log(\ln(2))\ln(2)}{4 h(n)} \right)^{-2D^2 - 2D} \delta^{-2}
\end{align*}

Plugging this into our run-time bound we get

\begin{align*}
    T_1(\ell, D, d, w, \delta) &< 2^{D \polylog(n)} \left(\frac{-\log(\ln(2))\ln(2)}{4 h(n)} \right)^{-2D^2 - 2D} \delta^{-2} \cdot2^{O((d\polylog(n))^{D3^{D}} w^{1/3})} \\
    &= \delta^{-2} \cdot2^{O((d\polylog(n))^{D3^{D}} w^{1/3})}
\end{align*}
\end{proof}

\begin{lemma*} [Restatement of Lemma \ref{thm:mainerrorlemma}]
    $\mathcal{A}_{full}(S=(C,L,M,N), \mathcal{B}, \delta, D)$ returns an $\delta$-additive error approximation of $|\bra{0_{ALL}}C\ket{0_{ALL}}|^2$ 
\end{lemma*}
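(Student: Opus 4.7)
The plan is to proceed by induction on the dimension $D$, with the base case $D=2$ handled by the assumed correctness of $\mathcal{B}$ (and with $D=3$ handled by the main result of \cite{CC20} if one prefers to start induction there, matching the base case of Theorem~\ref{thm:inductivestep}). The inductive hypothesis is that $\mathcal{A}_{full}(\cdot,\mathcal{B},\epsilon,D-1)$ returns an $\epsilon$-additive approximation of the quantity it is supposed to compute. Under this assumption, every call inside Algorithm~\ref{alg:constant-width-assumption-driver2} and Algorithm~\ref{alg:constant-width-assumption2} whose last argument is $D-1$ can be treated as a black-box oracle with a known additive error, and what remains is to show that the explicit formula used by the $D$-dimensional algorithm on top of these oracle calls gives an $\delta$-additive approximation. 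Essentially, the error analysis then becomes a dimension-blind re-run of the analysis in Section~5 of \cite{CC20}, with one dimension peeled off.

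First I would handle the trivial branches of Algorithm~\ref{alg:constant-width-assumption-driver2}: the brute-force branch on Line~\ref{ln:checkforsmalldelta} returns the exact answer, so the error is $0 \le \delta$; the branch on Line~\ref{ln:deltalarge} returns $1/2$ when $\delta \ge 1/2$, and since $|\bra{0_{ALL}} C \ket{0_{ALL}}|^2 \in [0,1]$ the additive error is at most $1/2 \le \delta$; and the branch on Line~5 (when $D=2$) uses $\mathcal{B}$, which is correct by assumption. Next I would address the slice-classification step. The inductive hypothesis guarantees that each call $\mathcal{A}_{full}(S,\mathcal{B}, E_1(\delta), D-1)$ used to evaluate the heavy-slice indicator is within $E_1(\delta) = 2^{\log(\delta)/(2h(n))-1} - 2^{\log(\delta)/h(n)-1}$ of the true value of $|\tr(\bra{0_{M_i}} C \ket{0_{ALL}}\bra{0_{ALL}}C^{\dagger}\ket{0_{M_i}})|$, which is exactly the gap needed (as in Remark~6 and Theorem~28 of \cite{CC20}) to distinguish the two regimes of Line~\ref{ln:doubleendslicesearch} and thereby either (i) return $0$ with the guarantee that $|\bra{0_{ALL}} C \ket{0_{ALL}}|^2 \le \delta$ (using the ``few heavy slices implies small amplitude'' argument from \cite{CC20}, which is purely algebraic and holds in any dimension), or (ii) pass a valid $K_{heavy}$ to Algorithm~\ref{alg:constant-width-assumption2} with the maximum-gap property used in Line~\ref{ln:alg2defz}.

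Then I would turn to the recursive subroutine $\mathcal{A}$. When its stopping condition fires on Line~\ref{ln:alg2stoppingwidth} (width below $w_0$ or $\eta<1$), the call $\mathcal{A}_{full}(S,\mathcal{B},\epsilon,D-1)$ is correct to additive error $\epsilon$ by the inductive hypothesis; the width bound $w_0 = 20d(\Delta + h(n) + 2)$ is exactly what is needed so that this subproblem can be absorbed into the $(D-1)$-dimensional qudit structure at a runtime blow-up of only $2^{O(w_0)}$ per site (the point used in the runtime proof, but here only its feasibility matters). In the recursive case, the return statement on Line~\ref{ln:returnofA} is precisely the inclusion--exclusion decomposition of $\bra{0_N}\phi_S\ket{0_N}$ developed in Theorem~22 / Lemma~27 of \cite{CC20}, written in terms of the sub-syntheses $S_{L,i}, S_{i,j}, S_{j,R}$ and the projectors $\Pi^K_{F_k}$ defined for $K = \log^3(n)$. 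I would verify that this algebraic identity is dimension-independent: the only dimension-specific ingredient is the existence of the geometric cuts and the heavy slices supplied in $K_{heavy}$, which was already established in the previous paragraph. Propagating the additive error through this sum requires bounding (a) the error in each recursive call $\mathcal{A}(S_{\cdot,\cdot},\eta-1)$, (b) the error in each dimension-$(D-1)$ call, whose error is $\epsilon$ or $\epsilon/2^{\Delta}$ by the inductive hypothesis, and (c) the error in the pre-computed normalizations $\kappa^i_{T,\epsilon}$, which are also produced (correctly to within $\epsilon$) via $(D-1)$-dimensional calls; the $\kappa$-factors stay bounded away from $0$ because of the heavy-slice condition, exactly as in \cite{CC20}. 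Summing over the $\Delta$ one-cut terms, the $\Delta^2$ two-cut terms, and the $\Delta^2 \cdot 2^{\Delta}$ power-set terms, the crucial cancellation is that each power-set term comes with base-case error $\epsilon/2^{\Delta}$, so its total contribution is $O(\Delta^2 \epsilon)$. After $\eta = \log(n)/(D\log(4/3))$ recursive iterations, the cumulative error is at most $\epsilon \cdot 2^{O(\eta \log \Delta)} = \epsilon \cdot 2^{O(\log(n)\log\log(n))}$, which is $\le \delta$ by our choice $\epsilon = \delta \cdot 2^{-10\log(n)\log\log(n)}$.

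The main obstacle I anticipate is keeping the error bookkeeping consistent across the two layers of recursion (in $\eta$ within a fixed dimension, and in $D$ across dimensions). Specifically, one must check that the inductively-assumed $(D-1)$-dimensional error bounds are tight enough to survive being multiplied by the $\kappa^{-(4K+1)}$ prefactors and the combinatorial blow-up of the power-set expansion. This is exactly the delicate part of the analysis in \cite{CC20}, and the only new content at dimension $D$ is to observe that nothing in that argument used $D=3$ beyond the existence of a correct algorithm for dimension $D-1$---precisely the statement provided by the inductive hypothesis on $\mathcal{A}_{full}$. Verifying this dimension-independence, at each place where \cite{CC20} invokes their 2D base case, is the bulk of the work but is mechanical once the induction is set up as above.
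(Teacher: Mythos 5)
Your proposal follows essentially the same approach as the paper: prove correctness by induction on $D$ (base case $D=3$ via \cite{CC20}, with the $D=2$ branch of Algorithm~\ref{alg:constant-width-assumption-driver2} handled by $\mathcal{B}$); dispose of the trivial branches on Lines~\ref{ln:checkforsmalldelta}, \ref{ln:deltalarge}, and \ref{ln:easyifalg1}; then reduce the nontrivial Line~\ref{ln:hardifalg1} branch to the divide-and-conquer error recursion of \cite{CC20}, observing that the Lemma~29/30/31 bounds there are dimension-blind given a correct $(D-1)$-dimensional oracle. This is exactly what the paper does via the recursion for $f(S,\eta_D,\Delta,\epsilon,D)$ in Equation~\eqref{eq:error-recursive-bound} and the adapted Lemmas~\ref{lem:singlecuttrick}--\ref{lem:multicuttrick}.

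One piece of bookkeeping in your sketch is incomplete: you summarize the cumulative error after $\eta$ recursive levels as ``at most $\epsilon \cdot 2^{O(\eta\log\Delta)}$,'' which only tracks the additive error injected by the $(D-1)$-dimensional oracle calls. The paper's recursion also carries the truncation error $(2e(n)+2g(n))^\Delta$ from stopping the inclusion-exclusion after $\Delta$ cuts, together with the $2^\Delta(2e(n))^K$ and $2^\Delta K\, e(n)^{2T}$ contributions inside $\mathcal{E}_3$; these are \emph{not} proportional to $\epsilon$ and must be separately crushed using the heavy-slice lower bound (which keeps $e(n) = O(1/\log^4 n)$) together with $K = T = \log^3(n)$. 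These terms propagate through the same $\eta(20\Delta^2)^\eta$ prefactor as the $\epsilon$-terms, so they need to appear in the final $o(1)\cdot\delta$ estimate. You gesture at the heavy-slice condition when discussing the $\kappa$-normalizations, but do not apply it to these truncation errors. Filling this in recovers the paper's chain of inequalities in Equation~\eqref{eq:upper-bound-even-spaced-larger2}; otherwise the argument as written would only prove a weaker bound that omits several genuine error sources.
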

\begin{proof}
 The error analysis of the error obtained by $\mathcal{A}_{full}(S, \mathcal{B}, \delta, D)$ can be broken into four cases according to the IF statements on Lines \ref{ln:checkforsmalldelta}, \ref{ln:deltalarge}, \ref{ln:easyifalg1}, and \ref{ln:hardifalg1} of Algorithm \ref{alg:quasi-poly-driver}. The first three cases can be easily shown to return the value in $\delta$-additive error within the promised runtime as shown in page 16 and 20 of \cite{CC20}. 
	
In the event that Line \ref{ln:hardifalg1} is satisfied, Algorithm \ref{alg:quasi-poly-driver} returns the following quantity:
\[\mathcal{A}(S, \eta = \frac{\log(n)}{D\log(4/3)} , \Delta = \log(n), \epsilon = \delta2^{-10 \log(n) \log(\log(n)))} , h(n) = \log^7(n), K_{heavy} , D,  \mathcal{B}),\]
which we know is an $f(S,\eta_D,\Delta,\epsilon, D)$-additive error approximation of $|\bra{0_{ALL}}C\ket{0_{ALL}}|^2$. Recall the definition of $\eta_D$ defined in Equation \ref{eq:eta_dimension}. Since $\eta_D =\frac{\log(n)}{D\log(4/3)}$, by Equation \ref{eq:error-recursive-bound}, we know that:

	\begin{align*}
	f(S,\eta,\Delta,\epsilon) &\leq \eta_D(20\Delta^2)^{\eta_D}\left((2e(n) + 2g(n))^\Delta + 3\Delta^2\mathcal{E}_3(n, K, T, \epsilon, \Delta)\right) \\
	&= \eta_D(20\Delta^2)^{\eta_D}3\Delta^2 O\left(\mathcal{E}_3(n, K, T, \epsilon, \Delta)\right) \\
	&= \eta_D(20\Delta^2)^{\eta_D}3\Delta^2O\left(2^\Delta(2e(n))^K + 2^\Delta K\left(e(n)^{2T} + \epsilon\right) + \epsilon\right)\\
		&= \frac{\log(n)}{D\log(4/3)}\left(20\log^2(n)\right)^{\frac{\log(n)}{D\log(4/3)}}3\log^2(n) O\left(2^{\log(n)} (2(1 - 2^{\frac{\log(\delta)}{\log^7(n)}}))^{\log^3(n)} \right .\\		& \left .+ 2^{\log(n)} \log^3(n)\left ((1 - 2^{\frac{\log(\delta)}{\log^7(n)}})^{2\log^3(n)}+\epsilon \right ) +  \delta 2^{-10 \log(n) \log(\log(n))} \right )\\
		& \leq  (\log(n))^{2\log(n)} \cdot \poly(n) \cdot 	\left ( (2(1 - 2^{\frac{\log(\delta)}{\log^7(n)}}))^{\log^3(n)} +\epsilon  + \delta 2^{-10 \log(n) \log(\log(n))} \right )\\
		&  \leq  (\log(n))^{2\log(n)} \cdot \poly(n) \cdot \left (	\left (  O\left (\frac{1}{\log^4(n)}\right )\right )^{\log^3(n)} +2 \cdot \delta 2^{-10 \log(n) \log(\log(n))} \right )\\
		&\leq  2^{2\log(n)\log(\log(n))} \cdot \poly(n) \cdot 	\left (  O\left (\frac{1}{\log^4(n)}\right )\right )^{\log^3(n)} + \delta 2^{-8 \log(n) \log(\log(n))} \\
	&  \leq  o(1) \cdot \delta + o(1)  \cdot \delta =  o(1)  \cdot \delta \numberthis \label{eq:upper-bound-even-spaced-larger2}
	\end{align*}
where the first inequality follows from our result from the next subsection and the rest follows by calculation, noting that $E_3(n, K, T, \epsilon,\Delta) \geq  (2e(n)+2\projerror)^{\Delta}$ for our specific choice of parameters (in particular $\Delta = \log(n)$). Note from \cite{CC20} that $e(n) \leq (1 - 2^{\frac{\log(\delta)}{\log^7(n)}}) = O(1/\log^4(n))$   (since $\delta \geq n^{-\log^2(n)} = 2^{-\log(n)^3}$ as verified in Algorithm \ref{alg:quasi-poly-driver}), $K = \log^3(n)$, $T = \log^3(n)$, and $\epsilon = \delta 2^{-10 \log(n) \log(\log(n))}$.  The final inequality, which claims $ 2^{2\log(n)\log(\log(n))} \cdot \poly(n) \cdot 	\left (  O\left (\frac{1}{\log^4(n)}\right )\right )^{\log^4(n)} = o(1) \cdot \delta$, again follows because $\delta \geq n^{-\log^2(n)}$ as verified in the driver algorithm, Algorithm \ref{alg:quasi-poly-driver}.

As described on page 22 of $\cite{CC20}$, $\bra{0_{ALL}}\ket{\Psi_\emptyset}\bra{\Psi_\emptyset}\ket{0_{ALL}}$  is the quantity that we wish for Algorithm \ref{alg:quasi-poly-subroutine} to output. Refer to Definition 17 and Lemma 18 from \cite{CC20} for the definition of $\ket{\Psi_{\emptyset}}$ and $\ket{\Psi_{\sigma}}$ for the subsequent analysis. Since Algorithm \ref{alg:quasi-poly-subroutine} depends on recursively calling itself, recall $\eta_{i}$ from Equation \ref{eq:eta_dimension} that defines the number of recursive calls for some dimension $i$. The error between the returned output of Algorithm \ref{alg:quasi-poly-subroutine}, (defined on Line \ref{ln:returnofA} of that algorithm) and the desired output quantity $\bra{0_{ALL}}\ket{\Psi_\emptyset}\bra{\Psi_\emptyset}\ket{0_{ALL}}$ is written below:

\begin{align}
{}&f(S,\eta_D,\Delta,\epsilon, D) \leq \Big\| \bra{0_{ALL}}\ket{\Psi_\emptyset}\bra{\Psi_\emptyset}\ket{0_{ALL}} - \mathcal{A}(S, \eta_{D}, D, \epsilon) \Big\| \\
{}&\leq \Big\| \bra{0_{ALL}}\ket{\Psi_\emptyset}\bra{\Psi_\emptyset}\ket{0_{ALL}} - \sum_{\sigma\in\mathcal{P}([\Delta])\setminus\emptyset} (-1)^{\abs{\sigma}+1} \bra{0_{ALL}}\ket{\Psi_\sigma}\bra{\Psi_\sigma}\ket{0_{ALL}} \Big\| \\
&+ \Big\| \sum_{\sigma\in\mathcal{P}([\Delta])\setminus\emptyset} (-1)^{\abs{\sigma}+1} \bra{0_{ALL}}\ket{\Psi_\sigma}\bra{\Psi_\sigma}\ket{0_{ALL}} - \mathcal{A}(S, \eta_{D}, D, \epsilon) \Big\| \\
{}&\leq \left (2e(n)+2\projerror \right)^{\Delta} + \Big\| \sum_{\sigma\in\mathcal{P}([\Delta])\setminus\emptyset} (-1)^{\abs{\sigma}+1} \bra{0_{ALL}}\ket{\Psi_\sigma}\bra{\Psi_\sigma}\ket{0_{ALL}} - \mathcal{A}(S, \eta_{D}, D, \epsilon) \Big\| \\
& = \left (2e(n)+2\projerror \right)^{\Delta} + \Bigg\|\sum_{\sigma\in\mathcal{P}([\Delta])\setminus\emptyset} (-1)^{\abs{\sigma}+1} \bra{0_{ALL}}\ket{\Psi_\sigma}\bra{\Psi_\sigma}\ket{0_{ALL}} \\
&-\Bigg( \sum_{i=1}^{\Delta}\frac{1}{(\kappa^{i}_{T, \epsilon})^{4K+1}}\mathcal{A}(S_{L,i},\eta_D-1, D, \epsilon) \cdot \mathcal{A}(S_{i,R},\eta_D-1, D, \epsilon) \\
&-\sum_{i=1}^{\Delta}\sum_{j=i+1}^{\Delta}\frac{1}{(\kappa^{i}_{T, \epsilon}\kappa^{j}_{T, \epsilon})^{4K+1}}\mathcal{A}(S_{L,i},\eta_D-1, D, \epsilon) \cdot \mathcal{A}_{full}(S_{i,j},\eta_{D-1}, D - 1, \epsilon)\cdot \mathcal{A}(S_{j,R},\eta_D-1, D, \epsilon) \\
&+\sum_{i=1}^{\Delta}\sum_{j=i+2}^{\Delta}\frac{1}{(\kappa^{i}_{T, \epsilon}\kappa^{j}_{T, \epsilon})^{4K+1}}\mathcal{A}(S_{L,i},\eta_D-1, D, \epsilon) \cdot \mathcal{A}(S_{j,R},\eta_D-1, D, \epsilon) \\
&\cdot\Bigg[\sum_{\sigma\in\mathcal{P}(\{i+1,\dots,j-1\})\setminus\emptyset}(-1)^{\abs{\sigma}+1} \mathcal{A}_{full}\left ( \left (\otimes_{k \in \sigma} \Pi^K_{F_k} \bra{0_{M_k}} \right )\phi_{i,j}\left (\otimes_{k \in \sigma} \ket{0_{M_k}}\Pi^K_{F_k}\right ), \eta_{D-1}, D - 1, E_3(\epsilon) \right) \Bigg]
\Bigg)\Bigg\|
\end{align}

Grouping analogous terms and using triangle inequality gives:

\begin{align*}
&f(S,\eta_D,\Delta,\epsilon, D) \leq \left (2e(n)+2\projerror \right)^{\Delta} \\
& +  \Bigg \|  \sum_{i=1}^{\Delta}\left ( \bra{0_{ALL}}\ket{\Psi_{\{i\}}}\bra{\Psi_{\{i\}}}\ket{0_{ALL}} -  \frac{1}{(\kappa^{i}_{T, \epsilon})^{4K+1}}\mathcal{A}(S_{L,i},\eta_D-1, D, \epsilon) \cdot \mathcal{A}(S_{i,R},\eta_D-1, D, \epsilon)\right )\\
&+\sum_{i=1}^{\Delta}\sum_{j=i+1}^{\Delta} \Bigg(\frac{1}{(\kappa^{i}_{T, \epsilon}\kappa^{j}_{T, \epsilon})^{4K+1}}\mathcal{A}(S_{L,i},\eta_D-1, D, \epsilon) \cdot \mathcal{A}_{full}(S_{i,j},\mathcal{B}, D - 1, \epsilon)\cdot \mathcal{A}(S_{j,R},\eta_D-1, D, \epsilon) \\
&- \bra{0_{ALL}}\ket{\Psi_{\{i,j\}}}\bra{\Psi_{\{i,j\}}}\ket{0_{ALL}} \Bigg) -\sum_{i=1}^{\Delta}\sum_{j=i+2}^{\Delta}\sum_{\sigma\in\mathcal{P}(\{i+1,\dots,j-1\})\setminus\emptyset}\Bigg ( \frac{1}{(\kappa^{i}_{T, \epsilon}\kappa^{j}_{T, \epsilon})^{4K+1}}\mathcal{A}(S_{L,i},\eta_D-1, D, \epsilon) \\
&\cdot\,\mathcal{A}(S_{j,R},\eta_D-1, D, \epsilon) \cdot(-1)^{\abs{\sigma}+1} \mathcal{A}_{full}\left ( \left (\otimes_{k \in \sigma} \Pi^K_{F_k} \bra{0_{M_k}} \right )\phi_{i,j}\left (\otimes_{k \in \sigma} \ket{0_{M_k}}\Pi^K_{F_k}\right ), \mathcal{B}, D - 1, E_3(\epsilon) \right)  \\
&- (-1)^{|\sigma| + 1}\bra{0_{ALL}}\ket{\Psi_{\{i,j\}\cup \sigma}}\bra{\Psi_{\{i,j\}\cup \sigma}}\ket{0_{ALL}} \Bigg)
\Bigg\|\\
& \leq \left (2e(n)+2\projerror \right)^{\Delta} \\
& +  \sum_{i=1}^{\Delta}\Bigg \|  \left ( \bra{0_{ALL}}\ket{\Psi_{\{i\}}}\bra{\Psi_{\{i\}}}\ket{0_{ALL}} - \frac{1}{(\kappa^{i}_{T, \epsilon})^{4K+1}}\mathcal{A}(S_{L,i},\eta_D-1, D, \epsilon) \cdot \mathcal{A}(S_{i,R},\eta_D-1, D, \epsilon) \right )  \Bigg \|\\
&+  \sum_{i=1}^{\Delta}\sum_{j=i+1}^{\Delta}\Bigg \| \Bigg (\frac{1}{(\kappa^{i}_{T, \epsilon}\kappa^{j}_{T, \epsilon})^{4K+1}}\mathcal{A}(S_{L,i},\eta_D-1, D, \epsilon) \cdot \mathcal{A}_{full}(S_{i,j},\mathcal{B}, D - 1, E_3(\epsilon))\cdot \mathcal{A}(S_{j,R},\eta_D-1, D, \epsilon)\\
&- \bra{0_{ALL}}\ket{\Psi_{\{i,j\}}}\bra{\Psi_{\{i,j\}}}\ket{0_{ALL}} \Bigg )  \Bigg \|\\
&-\sum_{i=1}^{\Delta}\sum_{j=i+2}^{\Delta}\Bigg \| \sum_{\sigma\in\mathcal{P}(\{i+1,\dots,j-1\})\setminus\emptyset}(-1)^{|\sigma|+1} \Bigg(  \frac{1}{(\kappa^{i}_{T, \epsilon}\kappa^{j}_{T, \epsilon})^{4K+1}}\mathcal{A}(S_{L,i},\eta_D-1, D, \epsilon) \cdot \mathcal{A}(S_{j,R},\eta_D-1, D, \epsilon)\\
&\cdot  \mathcal{A}_{full}\left ( \left (\otimes_{k \in \sigma} \Pi^K_{F_k} \bra{0_{M_k}} \right )\phi_{i,j}\left (\otimes_{k \in \sigma} \ket{0_{M_k}}\Pi^K_{F_k}\right ), \mathcal{B}, D - 1, E_3(\epsilon) \right) - \bra{0_{ALL}}\ket{\Psi_{\{i,j\}\cup \sigma}}\bra{\Psi_{\{i,j\}\cup \sigma}}\ket{0_{ALL}} \Bigg)
\Bigg\| \numberthis \label{eq:threesplits}
\end{align*}

We will now use Lemma \ref{lem:singlecuttrick}, \ref{lem:doublecuttrick}, and \ref{lem:multicuttrick} that are adapted versions of Lemma 29, 30, and 31 from \cite{CC20} to bound the last three terms of the above inequality. Because their bounds are independent of dimensions, the proofs for the three lemmas will be similar to the proofs in \cite{CC20}. 

\begin{equation}\label{eq:error-recursive-bound}
\begin{split}
f(S,\eta_D,\Delta,\epsilon, D) &\leq \left (2e(n)+2\projerror \right)^{\Delta}+ \Delta\left(\mathcal{E}_1(n, K, T, \epsilon) +2f(S,\eta_D-1,\Delta,\epsilon, D)\right) \\
&+ \Delta^2\left(\mathcal{E}_2(n, K, T, \epsilon) +2f(S,\eta_D-1,\Delta,\epsilon, D)\right)\\
&+\Delta^2\left(\mathcal{E}_3(n, K, T, \epsilon, \Delta) +16 f(S,\eta_D-1,\Delta,\epsilon, D) \right) \\
&\leq \left (2e(n)+2\projerror \right)^\Delta + 3\Delta^2\mathcal{E}_3(n, K, T, \epsilon, \Delta) + 20\Delta^2f(S,\eta_D-1,\Delta,\epsilon, D) \\
&= (2e(n) + 2g(n))^\Delta + 3\Delta^2\mathcal{E}_3(n, K, T, \epsilon, \Delta) \\
&+ 20\Delta^2\Bigg[(2e(n) + 2g(n))^\Delta + 3\Delta^2\mathcal{E}_3(n, K, T, \epsilon, \Delta) + 20\Delta^2 f(S, \eta_D - 2, \Delta, \epsilon, D)\Bigg]\\
&=\sum_{i=0}^{\eta_D - 1}\Bigg[(20\Delta^2)^i\Bigg((2e(n) + 2g(n))^\Delta + 3\Delta^2\mathcal{E}_3(n, K, T, \epsilon, \Delta)\Bigg)\Bigg] + (20\Delta^2)^{\eta_D}f(S, 0, \Delta, \epsilon, D)\\
&\leq\eta_{D}(20\Delta^2)^{\eta_{D}}\Bigg((2e(n) + 2g(n))^\Delta + 3\Delta^2\mathcal{E}_3(n, K, T, \epsilon, \Delta)\Bigg) + (20\Delta^2)^{\eta_D}\epsilon \\
&\leq \eta_D(20\Delta^2)^{\eta_D}\left(\epsilon + (2e(n) + 2g(n))^\Delta + 3\Delta^2\mathcal{E}_3(n, K, T, \epsilon, \Delta)\right)\\
&\leq \eta_D(20\Delta^2)^{\eta_D}\left((2e(n) + 2g(n))^\Delta + 3\Delta^2\mathcal{E}_3(n, K, T, \epsilon, \Delta)\right)
\end{split}
\end{equation}

where the above inequalities follow because $\mathcal{E}_3(n, K, T, \epsilon, \Delta) \geq \mathcal{E}_2(n, K, T, \epsilon) \geq \mathcal{E}_1(n, K, T, \epsilon)$ and $f(S, 0, \Delta, \epsilon, \cdot) \leq \epsilon \leq \mathcal{E}_3(n, K, T, \epsilon, \Delta)$

\end{proof}

\begin{lemma} \label{lem:singlecuttrick}
	
	\begin{align*}
&\Bigg \|  \left ( \frac{1}{(\kappa^{i}_{T, \epsilon})^{4K+1}}\mathcal{A}(S_{L,i},\eta_D-1, D, \epsilon) \cdot \mathcal{A}(S_{i,R},\eta_D-1, D, \epsilon)  - \bra{0_{ALL}}\ket{\Psi_{\{i\}}}\bra{\Psi_{\{i\}}}\ket{0_{ALL}} \right )  \Bigg \| \\
&\leq \mathcal{E}_1(n, K, T, \epsilon) +2f(S,\eta_D-1,\Delta,\epsilon, D),
	\end{align*}
	
	where $\mathcal{E}_1(n, K, T, \epsilon) \equiv 10K(e(n)^{2T} + 6\projerror +  \epsilon)$.
\end{lemma}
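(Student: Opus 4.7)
The plan is to follow the proof of Lemma 29 in \cite{CC20} essentially verbatim, verifying that nothing in the argument depends on the dimension $D$ being $3$. I will decompose the quantity inside the norm into two intermediate pieces and use the triangle inequality, so that one piece is bounded by $\mathcal{E}_1(n,K,T,\epsilon)$ and the other piece is bounded by $2f(S,\eta_D-1,\Delta,\epsilon,D)$.

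First, I will recall from Definition 17 (and the surrounding discussion) of \cite{CC20} that $\ket{\Psi_{\{i\}}}$ is, up to normalization by powers of $\kappa^{i}_{T,\epsilon}$, the state produced by making the single cut at slice $K_i$: it factors into a left sub-synthesis component $S_{L,i}$ and a right sub-synthesis component $S_{i,R}$ after inserting the truncated projector $\Pi^K_{F_i}$ and the measurement $\bra{0_{M_i}}$ on the slice register. The only effect of the cut on $|\bra{0_{ALL}}\ket{\Psi_{\{i\}}}\bra{\Psi_{\{i\}}}\ket{0_{ALL}}|$, compared with the exact product
\[
\tfrac{1}{(\kappa^{i}_{T,\epsilon})^{4K+1}}\bra{0_{ALL}}\phi_{S_{L,i}}\ket{0_{ALL}}\cdot\bra{0_{ALL}}\phi_{S_{i,R}}\ket{0_{ALL}},
\]
comes from (i) approximating the true projector onto the relevant eigenspace by the degree-$T$ polynomial surrogate (giving an $e(n)^{2T}$ error per factor, and there are $O(K)$ such factors in the unfolded product), (ii) approximating the normalizing constant $\kappa^{i}_{T,\epsilon}$ to additive precision $\epsilon$ (which propagates through $4K+1$ powers), and (iii) the $g(n) = \projerror$ errors coming from the block-encoded projections used to define $\Pi^K_{F_i}$. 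Summing these contributions with the $10K$ multiplicity that tracks the number of projector insertions and $\kappa$-factors gives a total bound of $10K\bigl(e(n)^{2T}+6\projerror+\epsilon\bigr) = \mathcal{E}_1(n,K,T,\epsilon)$ for the first piece.

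Second, I will replace the exact quantities $\bra{0_{ALL}}\phi_{S_{L,i}}\ket{0_{ALL}}$ and $\bra{0_{ALL}}\phi_{S_{i,R}}\ket{0_{ALL}}$ by the recursive calls $\mathcal{A}(S_{L,i},\eta_D-1,\ldots)$ and $\mathcal{A}(S_{i,R},\eta_D-1,\ldots)$. By the inductive definition of $f(S,\eta,\Delta,\epsilon,D)$, each such recursive call incurs error at most $f(S,\eta_D-1,\Delta,\epsilon,D)$ on its own synthesis. Since both sub-syntheses are themselves $D$-dimensional (the cut only reduces the width in the subdivided dimension, so $D$ is unchanged), the inductive error bound applies unmodified, and the product of the two approximations differs from the product of the exact values by at most $2f(S,\eta_D-1,\Delta,\epsilon,D)$ after absorbing the $\tfrac{1}{(\kappa^{i}_{T,\epsilon})^{4K+1}}$ prefactor, whose magnitude is bounded using the lower bound on $\kappa^{i}_{T,\epsilon}$ stemming from the heaviness condition $|\tr(\bra{0_{M_i}}C\ket{0_{ALL}}\bra{0_{ALL}}C^\dagger\ket{0_{M_i}})|\ge 2^{\log(\delta)/h(n)}$ enforced by $K_{heavy}$.

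The main obstacle is the careful bookkeeping in the first piece: ensuring that the $\epsilon$ error from $\kappa^{i}_{T,\epsilon}$ and the $\projerror$ error from the block-encoded projectors compose multiplicatively across $4K+1$ factors and still fit inside the $10K(\cdot)$ envelope. This requires invoking the same product-of-near-unit-factors estimate used in \cite{CC20} (namely, if each of $m$ factors deviates from its true value by at most $\xi$ and each factor is $O(1)$, then the product deviates by at most $O(m\xi)$), plus the explicit lower bound on $\kappa^{i}_{T,\epsilon}$ that is guaranteed by membership in $K_{heavy}$. Because all of this reasoning is about a single cut on a single synthesis and is agnostic to the ambient dimension, the original CC20 analysis transfers line-by-line to general $D$, which is exactly why the stated bound has no explicit $D$-dependence.
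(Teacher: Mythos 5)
Your proposal matches the paper's approach exactly: the paper itself provides no explicit proof of Lemma~\ref{lem:singlecuttrick}, stating only that it is an adapted version of Lemma 29 from \cite{CC20} whose proof carries over because the bound is dimension-independent. Your reconstruction of that argument --- the triangle-inequality split into a projector/$\kappa$-normalization approximation piece bounded by $\mathcal{E}_1(n,K,T,\epsilon)$ and a recursive piece bounded by $2f(S,\eta_D-1,\Delta,\epsilon,D)$, with the observation that the single-cut bookkeeping never references the ambient dimension --- is precisely the justification the paper is implicitly invoking.
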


\begin{lemma}\label{lem:doublecuttrick}
		\begin{equation}
		\begin{split}
		&\Bigg \| \left (\frac{1}{(\kappa^{i}_{T, \epsilon}\kappa^{j}_{T, \epsilon})^{4K+1}}\mathcal{A}(S_{L,i},\eta_D-1, D, \epsilon) \cdot \mathcal{A}_{full}(S_{i,j},\mathcal{B}, D - 1, \epsilon)\cdot \mathcal{A}(S_{j,R},\eta_D-1, D, \epsilon) \right.\\
		&- \left.\bra{0_{ALL}}\ket{\Psi_{\{i,j\}}}\bra{\Psi_{\{i,j\}}}\ket{0_{ALL}} \right )  \Bigg \| \\
		&\leq \mathcal{E}_2(n, K, T,\epsilon) +2f(S,\eta_D-1,\Delta,\epsilon, D),
		\end{split}
		\end{equation}
		
		where $\mathcal{E}_2(n, K, T, \epsilon) \equiv 10K(e(n)^{2T} + 6\projerror +  \epsilon) + \epsilon $
\end{lemma}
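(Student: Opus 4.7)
The plan is to adapt the proof of Lemma 30 from \cite{CC20} to arbitrary dimension $D$. The only substantive change from the $3D$ case is that the middle sub-synthesis $S_{i,j}$ is now handled by a recursive call to $\mathcal{A}_{full}$ in dimension $D-1$; by the inductive hypothesis of Theorem \ref{thm:inductivestep}, this call returns an $\epsilon$-additive approximation of the desired middle matrix element, which accounts for exactly the extra $+\epsilon$ term appearing in $\mathcal{E}_2$ compared with $\mathcal{E}_1$.

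First, using Definition 17 and Lemma 18 from \cite{CC20}, I would rewrite $\bra{0_{ALL}}\ket{\Psi_{\{i,j\}}}\bra{\Psi_{\{i,j\}}}\ket{0_{ALL}}$ as a product of three exact sub-synthesis matrix elements, one each for $S_{L,i}$, $S_{i,j}$, and $S_{j,R}$, each suitably normalized by powers of the ideal normalization constants, obtained by inserting the projectors $\Pi^K_{F_i}$ and $\Pi^K_{F_j}$ at the two cut locations. I would then telescope the error via the triangle inequality into three pieces: (a) the substitution of the ideal $\kappa$'s by their approximations $\kappa^{i}_{T,\epsilon}$ and $\kappa^{j}_{T,\epsilon}$, which by the same block-encoding plus polynomial-projector analysis used in Lemma \ref{lem:singlecuttrick} contributes $10K(e(n)^{2T} + 6g(n) + \epsilon)$; (b) the recursive algorithmic error on the two outer sub-syntheses, contributing $2f(S,\eta_D-1,\Delta,\epsilon,D)$; and (c) the additive $\epsilon$ error from approximating the middle factor via $\mathcal{A}_{full}$ in one lower dimension, supplied by the inductive hypothesis. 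Summing these three contributions yields exactly the claimed bound.

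The hard part will be the bookkeeping on how the multiplicative $\epsilon$-errors of all three approximate factors combine when assembled via the $\kappa^{-(4K+1)}$ normalization. Since each exact matrix element involved is bounded in absolute value by $1$, and the $\kappa$-factors are controlled below via the heavy-slice condition enforced when $K_{heavy}$ was built in Algorithm \ref{alg:constant-width-assumption-driver2}, the dimension $D$ plays no role in this propagation; this is precisely why the lemma's bound is dimension-independent, and why the proof can be lifted verbatim from the $3D$ argument once the $\mathcal{A}_{full}$ substitution is made. A secondary subtlety is confirming that $S_{i,j}$ is a valid input to $\mathcal{A}_{full}$ in dimension $D-1$: this is guaranteed because the middle region, as constructed on Line \ref{ln:alg2defz} of Algorithm \ref{alg:constant-width-assumption2}, has width at most $10d(\Delta+h(n)+2)$ in the cut direction and therefore embeds as a $(D-1)$-dimensional synthesis, after absorbing that width into the qudit structure at polynomial cost, for the purposes of the inductive call.
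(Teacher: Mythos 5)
Your proposal correctly mirrors the paper's intent: the paper does not give an explicit proof of this lemma, instead stating that Lemmas \ref{lem:singlecuttrick}--\ref{lem:multicuttrick} are ``adapted versions of Lemma 29, 30, and 31 from \cite{CC20}'' whose proofs ``will be similar'' because the bounds are dimension-independent, and your plan is precisely that adaptation of Lemma 30. Your telescoping decomposition, the attribution of the $+\epsilon$ in $\mathcal{E}_2$ to the approximation of the middle factor $\mathcal{A}_{full}(S_{i,j},\mathcal{B},D-1,\epsilon)$, and your observation that the inductive hypothesis of Theorem \ref{thm:inductivestep} in dimension $D-1$ supplies that guarantee, together with the remark that $S_{i,j}$ has polylogarithmic width in the cut direction and so embeds as a valid $(D-1)$-dimensional synthesis, all match how the paper sets up and uses this lemma.

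One small clarification you should be aware of: the extra $+\epsilon$ in $\mathcal{E}_2$ relative to $\mathcal{E}_1$ is not a new feature introduced by the higher-dimensional setting --- it was already present in Lemma 30 of \cite{CC20}, where the middle factor was approximated by a base-case algorithm to error $\epsilon$. What changes in dimension $D$ is only \emph{which} algorithm supplies that $\epsilon$-accurate middle factor (a recursive call to $\mathcal{A}_{full}$ in dimension $D-1$ rather than the 2D base case), not the form of the error bound; your phrasing ``accounts for exactly the extra $+\epsilon$ term'' is correct in substance but could be read as implying the term is new. Otherwise the proposal is sound and aligned with the paper.
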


\begin{lemma} \label{lem:multicuttrick}
	\begin{align*}
	&\Bigg \| \sum_{\sigma\in\mathcal{P}(\{i+1,\dots,j-1\})\setminus\emptyset}(-1)^{|\sigma|+1} \left (  \frac{1}{(\kappa^{i}_{T, \epsilon}\kappa^{j}_{T, \epsilon})^{4K+1}}\mathcal{A}(S_{L,i},\eta_D-1, D, \epsilon) \cdot \mathcal{A}(S_{j,R},\eta_D-1, D, \epsilon) \right .\\
& \cdot \mathcal{A}_{full}\left ( \left (\otimes_{k \in \sigma} \Pi^K_{F_k} \bra{0_{M_k}} \right )\phi_{i,j}\left (\otimes_{k \in \sigma} \ket{0_{M_k}}\Pi^K_{F_k}\right ), \mathcal{B}, D - 1, E_3(\epsilon) \right)  \\
&\left .- \bra{0_{ALL}}\ket{\Psi_{\{i,j\}\cup \sigma}}\bra{\Psi_{\{i,j\}\cup \sigma}}\ket{0_{ALL}} \right )
\Bigg\| \numberthis\\
	&\leq \mathcal{E}_3(n, K, T, \epsilon, \Delta) + 16f(S,\eta_D-1,\Delta,\epsilon, D),
	\end{align*}
	
	where 
	
	\begin{align*}
	& \mathcal{E}_3(n, K, T, \epsilon, \Delta) \equiv   O \left (2^{\Delta} (6\projerror) + 2^\Delta K \left (e(n)^{2T}+\epsilon \right ) + \epsilon\right) 
	\end{align*}
\end{lemma}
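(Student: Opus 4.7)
My plan is to adapt the proof of Lemma 31 from \cite{CC20}, using the same hybrid/triangle-inequality approach but carefully tracking the recursive dimensional structure. The first step will be to note that in both the approximation (left side) and the target inclusion-exclusion expansion (right side), the factors $\mathcal{A}(S_{L,i},\eta_D-1, D, \epsilon)$, $\mathcal{A}(S_{j,R},\eta_D-1, D, \epsilon)$, and the normalizer $(\kappa^{i}_{T,\epsilon}\kappa^{j}_{T,\epsilon})^{-(4K+1)}$ do not depend on $\sigma$ and can be pulled outside the $\sigma$-sum. This matches the factorization of $\ket{\Psi_{\{i,j\}\cup\sigma}}$ into left, middle, and right portions from Definition 17 of \cite{CC20}, and is crucial: it is precisely what prevents the $f$-error contributions from accumulating a $2^\Delta$ factor.

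Next I would bound a single $\sigma$-indexed term by a standard hybrid argument: replace each algorithmic approximation by its exact counterpart one at a time, using the triangle inequality across the product. The $\mathcal{A}(S_{L,i})$ and $\mathcal{A}(S_{j,R})$ substitutions each cost at most $f(S,\eta_D-1,\Delta,\epsilon, D)$ (multiplied by the remaining factors, which are bounded after normalization). The $\mathcal{A}_{full}(\cdot,\mathcal{B},D-1,E_3(\epsilon))$ substitution costs at most $E_3(\epsilon)$ by the inductive hypothesis on dimension $D-1$ supplied by Theorem \ref{thm:inductivestep}. The $\kappa^{i}_{T,\epsilon}$ corrections contribute $O(K(e(n)^{2T}+\epsilon))$ per cut via the analysis underlying Lemma 14 of \cite{CC20}, and each projection $\Pi^K_{F_k}$ for $k\in\sigma$ absorbs an additional $g(n)$ of error. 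Summing these per-$\sigma$ contributions over $|\mathcal{P}(\{i+1,\dots,j-1\})\setminus\emptyset|\leq 2^\Delta$ subsets yields the $2^\Delta\cdot 6g(n)$, $2^\Delta K(e(n)^{2T}+\epsilon)$, and trailing $\epsilon$ pieces assembled into $\mathcal{E}_3$. The $\sigma$-independent $\mathcal{A}(S_{L,i})$ and $\mathcal{A}(S_{j,R})$ errors, having been pulled outside the sum, contribute only a constant ($16$) multiple of $f(S,\eta_D-1,\Delta,\epsilon, D)$, the constant absorbing the hybrid-step count and the bounded product-norms encountered while interchanging two factors in a doubly-normalized product.

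The main obstacle will be controlling the error from using the approximate normalizer $\kappa^{i}_{T,\epsilon}$ in place of the true $\|\rho_F^T\|$ when raised to the $(4K+1)$-th power, since naive multiplicative error propagation would blow up with $K$. Following \cite{CC20}, I would use $\|\rho_F^T\|\leq 1$ together with the bound $|\kappa^{i}_{T,\epsilon}-\|\rho_F^T\||\leq e(n)^{2T}+\epsilon$ from Lemma 14 of \cite{CC20}, then apply a first-order Taylor bound on $x\mapsto x^{-(4K+1)}$ at $x=\|\rho_F^T\|$ to keep the resulting multiplicative error linear in $K$. Because the stated bound $\mathcal{E}_3$ is dimension-independent and the only genuinely new ingredient is the recursive call $\mathcal{A}_{full}(\cdot,\mathcal{B},D-1,E_3(\epsilon))$, which is handled directly by the inductive hypothesis, the remainder of the argument transfers from \cite{CC20} essentially verbatim.
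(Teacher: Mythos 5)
Your proposal is correct and follows essentially the same route as the paper, which itself proves Lemma~\ref{lem:multicuttrick} by deferring to the hybrid/triangle-inequality argument of Lemma~31 of [CC20] and noting that the resulting bound is dimension-independent. Your reconstruction --- factoring the $\sigma$-independent pieces $\mathcal{A}(S_{L,i})$, $\mathcal{A}(S_{j,R})$ and the normalizer out of the $\sigma$-sum so that their errors land on a bounded inclusion-exclusion quantity and contribute only $O(f)$ rather than $2^\Delta f$, summing the $\sigma$-dependent middle-section and projection errors over at most $2^\Delta$ subsets to assemble $\mathcal{E}_3$, using a linear-in-$K$ Taylor bound on $x\mapsto x^{-(4K+1)}$ for the $\kappa$ normalizer, and invoking the $(D-1)$-dimensional inductive hypothesis from Theorem~\ref{thm:inductivestep} for the recursive $\mathcal{A}_{full}$ calls --- is precisely the adaptation the paper intends.
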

\section*{Acknowledgment}

MC thanks Sergey Bravyi for helpful discussions.  We thank Gorjan Alagic and Nolan Coble for attending and contributing to some project group meetings.

\end{document}